\documentclass[a4paper,UKenglish,cleveref, autoref, thm-restate]{lipics-v2021}

% \hideLIPIcs

\bibliographystyle{plainurl}

\usepackage[ruled,noend,linesnumbered]{algorithm2e}
\SetKwProg{Def}{define}{:}{}
\SetKw{Next}{next}
\SetKwData{zerosFound}{zerosFound}
\SetKwData{multCount}{multCount}
\usepackage{csquotes}
\usepackage{enumerate}
\usepackage{amsthm}
\usepackage{amsmath}
\usepackage{amsfonts}
\usepackage{amssymb}
\usepackage{faktor}
\usepackage{bm}
\usepackage{booktabs}

\usepackage{graphicx}
\usepackage[clock]{ifsym}
\usepackage{subcaption}

\newtheorem{problem}[definition]{Problem}

\usepackage[colorinlistoftodos,prependcaption,obeyFinal]{todonotes}

\title{On the \texorpdfstring{$p$}{p}-adic Skolem Problem}

\author{Piotr {Bacik}}{University of Oxford, UK \and Max Planck Institute for Software Systems, Saarland Informatics Campus, Germany}{piotr.bacik@stcatz.ox.ac.uk}{https://orcid.org/0009-0006-0248-3204}{Supported by EPSRC grant EP/X033813/1.}

\author{Jo\"el {Ouaknine}}{Max Planck Institute for Software Systems, Saarland Informatics Campus, Germany}{joel@mpi-sws.org}{https://orcid.org/0000-0003-0031-9356}{Supported by ERC grant DynAMiCs (101167561) and DFG grant 389792660 as part of \href{https://perspicuous-computing.science}{TRR~248}.}

\author{David {Purser}}{University of Liverpool, UK}{D.Purser@liverpool.ac.uk}{https://orcid.org/0000-0003-0394-1634}{}

\author{James {Worrell}}{University of Oxford, UK}{jbw@cs.ox.ac.uk}{https://orcid.org/0000-0001-8151-2443}{Supported by EPSRC grant EP/X033813/1.}

\authorrunning{P. Bacik, J. Ouaknine, D. Purser, and J. Worrell}

\Copyright{Piotr Bacik,  Jo\"el Ouaknine, David Purser, and James Worrell}

\nolinenumbers

\begin{CCSXML}
<ccs2012>
   <concept>
       <concept_id>10003752.10003790.10002990</concept_id>
       <concept_desc>Theory of computation~Logic and verification</concept_desc>
       <concept_significance>500</concept_significance>
       </concept>
 </ccs2012>
\end{CCSXML}

\ccsdesc[500]{Theory of computation~Logic and verification}

\keywords{Skolem Problem, \texorpdfstring{$p$}{p}-adic Schanuel Conjecture, Skolem Conjecture, Exponential Local-Global Principle, exponential polynomial}

\acknowledgements{
The authors would like to thank Naoki Fujita for helpful comments on the main algorithm, as well as pointing out an oversight in the previous version of the paper. Jo\"{e}l Ouaknine is also affiliated with Keble College, Oxford as an \href{https://emmy.network}{emmy.network} Fellow.
}

\category{}

\supplement{}
\supplementdetails[subcategory={Online tool}, cite={}, swhid={}]{InteractiveResource}{https://skolem.mpi-sws.org/?padic}
\supplementdetails[subcategory={Source code}, cite={bacik_2025_16794130}, swhid={}]{Software}{https://doi.org/10.5281/zenodo.15858619}

\EventEditors{Meena Mahajan, Florin Manea, Annabelle McIver, and Nguy\~{\^{e}}n Kim Th\'{\u{a}}ng}
\EventNoEds{4}
\EventLongTitle{43rd International Symposium on Theoretical Aspects of Computer Science (STACS 2026)}
\EventShortTitle{STACS 2026}
\EventAcronym{STACS}
\EventYear{2026}
\EventDate{March 9--September 13, 2026}
\EventLocation{Grenoble, France}
\EventLogo{}
\SeriesVolume{364}
\ArticleNo{51}

\newcommand{\Q}{\mathbb{Q}}

\newcommand{\p}{\mathfrak{p}}
\newcommand{\Z}{\mathbb{Z}}
\newcommand{\N}{\mathbb{N}}
\newcommand{\K}{\mathbb{K}}
\renewcommand{\O}{\mathcal{O}}
\newcommand{\C}{\mathbb C}
\newcommand{\LRS}[1]{{\boldsymbol{#1}}}

\counterwithin{equation}{section}

\begin{document}

\maketitle

\begin{abstract}
The Skolem Problem asks to determine whether a given linear recurrence sequence (LRS) has a zero term.  Showing decidability of this problem is equivalent to giving an effective proof of the Skolem-Mahler-Lech Theorem, which asserts that a non-degenerate LRS has finitely many zeros.  The latter result was proven over 90 years ago via an ineffective method
showing that such an LRS has only finitely many $p$-adic zeros.  In this paper we consider the problem of determining whether a given LRS has a $p$-adic zero, as well as the corresponding function problem of computing exact representations of all $p$-adic zeros.  We present algorithms for both problems and report on their implementation. The output of the algorithms is unconditionally correct, and termination is guaranteed subject to the $p$-adic Schanuel Conjecture (a standard number-theoretic hypothesis concerning the $p$-adic exponential function).
While these algorithms do not solve the Skolem Problem, they can be exploited to find natural-number and rational zeros under additional hypotheses.  To illustrate this, we apply our results to show decidability of the \emph{Simultaneous Skolem Problem} (determine whether two coprime linear recurrences have a common natural-number zero), again subject to the $p$-adic Schanuel Conjecture.  \end{abstract}
\newpage
\section{Introduction}
\subsection{The Skolem Problem} \label{sec:intro}
A \emph{linear recurrence sequence} (LRS) $\LRS{u} = \langle u_n \rangle_{n=0}^\infty$ is a sequence of integers
satisfying a {linear recurrence relation}:
\begin{align}\label{eqn:LRS_def}
u_{n+d} = a_{d-1} u_{n+d-1} + \dots + a_0 u_n
\end{align}
where $a_0, \dots , a_{d-1} \in \Z$ and $a_0\neq 0$.
We call $d$ the order of the recurrence.  If $d$ is the minimum order of a recurrence satisfied by $\LRS{u}$ then we call
$d$ the \emph{order} of $\LRS{u}$. 

The zero set of an LRS $\LRS{u}$ is $\{ n \in \mathbb N : u_n=0\}$.
The celebrated Skolem-Mahler-Lech theorem~\cite{Skolem_SML,Mahler_SML,lech_note_1953} states that the zero set is  
comprised of a union of a finite set and finitely many arithmetic progressions. The statement may be refined via the concept of \emph{non-degeneracy}. Define the \emph{characteristic polynomial}\footnote{Note that the minimal-order recurrence for an LRS $\LRS{u}$ is unique, and so the characteristic polynomial of $\LRS u$ is also unique.} of the recurrence~\eqref{eqn:LRS_def} to be
\begin{align}
    g(X) := X^d - a_{d-1}X^{d-1} - \dots - a_0 \, .
\end{align}
Let $\lambda_1, \dots, \lambda_s \in \mathbb{\overline{\mathbb Q}}$ be the distinct roots of $g$; these are called  the \emph{characteristic roots} of $\LRS{u}$. We say $\LRS{u}$ is \emph{non-degenerate} if no ratio $\lambda_i/\lambda_j$ of distinct characteristic roots is a root of unity.  A given LRS can be effectively decomposed as the interleaving of finitely many non-degenerate LRS~\cite[Theorem 1.6]{recurrence}.  The core of the Skolem-Mahler-Lech Theorem is that a non-degenerate LRS that is not identically zero has finitely many zero terms.
Unfortunately, the proof of this result remains ineffective:
there is no known algorithm to determine whether a non-degenerate LRS has a zero.
This is the famous \emph{Skolem Problem}:
\begin{problem}[The Skolem Problem]
Given an LRS $\LRS{u}$  specified by a non-degenerate linear recurrence and a set of initial values, determine whether there exists $n \in \mathbb N$ such that $u_n=0$.
\end{problem}
One can also formulate a corresponding function version of this problem in which the task is to compute the finite set of zeros of a given non-degenerate LRS\@.  We denote the decision version by $\mathtt{SP}(\mathbb N)$ and the function version
by $\mathtt{FSP}(\mathbb N)$.  This notation makes explicit that we are looking for natural-number zeros of the LRS.  

It is folklore that computability of $\mathtt{FSP}(\mathbb N)$ reduces to decidability of $\mathtt{SP}(\mathbb N)$.  Assuming the latter, given an LRS $\LRS{u}$, the finitely many zeros in each non-degenerate subsequence of $\LRS{u}$ can be found by brute-force search and, since the infinite suffix of an LRS remains an LRS, one can use a decision procedure for $\mathtt{SP}(\mathbb N)$ to certify that no zeros remain to be found.  However, decidability of the Skolem Problem has remained open for close to a century, with only partial results known from restricting the order (see the exposition of~\cite{bilu2025} on results of \cite{Mignotte_distance,Vereshchagin,bacik_completing_2025}), restricting to reversible sequences of low order~\cite{Lipton_2022}, or restricting to \emph{simple} LRS and assuming certain number-theoretic conjectures~\cite{SkolemMeetsSchanuel}.

In the remainder of this section we introduce various relaxations of the Skolem Problem that arise by extending LRS to larger domains and seeking zeros of such extensions.
\subsection{The Bi-Skolem Problem}
The first variant of the Skolem Problem involves bi-infinite (that is, two-way infinite) sequences.
Indeed, given a recurrence~\eqref{eqn:LRS_def} and initial values $u_0,\ldots,u_{d-1}\in \Q$, there is a unique bi-infinite sequence
$\LRS{u} = \langle u_n \rangle_{n = -\infty}^\infty$ that satisfies the recurrence.  We call $\LRS{u}$ 
a \emph{linear recurrent bi-sequence} (LRBS)\@.
%The notion of characteristic roots, non-degeneracy and the zero set generalise to LRBS in the obvious way.
For example, the Fibonacci sequence extends to an LBRS
$\langle \ldots,5,-3,2, -1,1,0,1,1,2,3,5\ldots\rangle$.
\begin{problem}[Bi-Skolem Problem]
  Given an LRBS $\LRS{u}$,  specified by a non-degenerate linear recurrence and a set of initial values, determine whether there exists $n \in \mathbb Z$ such that $u_n=0$.
\end{problem}
We use the notation $\mathtt{SP}(\mathbb Z)$ to refer to the above decision problem and we write
$\mathtt{FSP}(\mathbb Z)$ for the corresponding function version, in which we output the finite set of zeros of a given non-degenerate bi-infinite sequence.

For function problems $P_1, P_2$, write $P_1 \leq P_2$ if there is a Turing reduction from $P_1$ to $P_2$. Here for purposes of comparison we view decision problems as function problems with outputs in $\{\texttt{TRUE},\texttt{FALSE}\}$. Write $P_1 \equiv P_2$ if $P_1 \leq P_2$ and $P_2 \leq P_1$. Then it is easy to see that
  \[ \mathtt{SP}(\mathbb Z) \leq \mathtt{FSP}(\mathbb Z) \equiv \mathtt{FSP}(\mathbb N) \equiv \mathtt{SP}(\mathbb N)
  \, .\]
  Indeed, the reduction $\mathtt{FSP}(\mathbb Z) \leq \mathtt{FSP}(\mathbb N)$ is realised by splitting a given bi-infinite LRS around the index zero  into forward and backward sequences
  (both of which are LRS) and computing the respective zeros of each of the two one-way infinite sequences.

  Unlike for Skolem's Problem, it is not known whether the function version of the Bi-Skolem Problem reduces to its decision version, i.e., it is not known whether $\mathtt{FSP}(\mathbb Z) \leq \mathtt{SP}(\mathbb Z)$; however the reduction holds if one assumes
  the weak $p$-adic Schanuel Conjecture~\cite{SkolemMeetsSchanuel}. 

  \subsection{The Rational Skolem Problem}
  Having expanded the index set of an LRS to $\Z$ in the Bi-Skolem Problem, we consider a further expansion of its domain to
  $\Q$, which leads us to consider \emph{rational zeros} of an LRS\@.
  One way to realise this generalisation is via the exponential-polynomial formulation of LRS\@.
  It is classical that an LRS $\boldsymbol{u}$ of order $d$ with characteristic roots $\lambda_1, \dots , \lambda_s$ admits the following representation:
  \begin{align}\label{eqn:Binet}
    u_n = \sum_{i=1}^s q_i(n)\lambda_i^n \, ,
\end{align}
where the $q_i$ are polynomials with algebraic coefficients and degree one less than the multiplicity of $\lambda_i$ as a root of the characteristic polynomial of $\LRS{u}$.
We say that $\frac{a}{b} \in \mathbb Q$ is a rational zero of $\LRS{u}$ if $\sum_{i=1}^s q_i(\frac{a}{b}) \lambda_i^{\frac{a}{b}}=0$,
where $\lambda^{\frac{a}{b}}$ denotes any $b$-th root of $\lambda^a$.  For example, the sequence
$u_n = 4^n+2$ has a rational zero at $\frac{1}{2}$ that is witnessed by setting $4^{1/2}:=-2$.
The Rational Skolem Problem $\mathtt{SP}(\mathbb Q)$ asks to determine whether a given LRS has a rational zero, while its function analogue
$\mathtt{FSP}(\mathbb Q)$ asks to compute all rational zeros of a non-degenerate sequence.
We note that 
the definition of rational zeros is consistent with that of integer zeros, that is, the integer rational zeros of $\boldsymbol u$ are precisely the zeros of the bi-infinite extension of $\boldsymbol u$.
%For example, the rational zeros of the Tribonacci sequence are $\{0,-1,-4,-17,1/3,-5/3\}$~\cite{bilu2024twistedrationalzeroslinear}.
  
A recent result~\cite[Proposition 4.7]{bilu_twisted_2025} shows that the denominator of any rational zero can be effectively bounded. This allows us to determine the relationship of the Rational Skolem Problem with the Bi-Skolem Problem and the usual Skolem Problem.  We have
\begin{align*}
\mathtt{SP}(\mathbb Q) \overset{(1)}{\equiv}
\mathtt{SP}(\mathbb Z) \overset{(2)}{\leq}  \mathtt{SP}(\mathbb N) \equiv \mathtt{FSP}(\mathbb N)
\equiv \mathtt{FSP}(\mathbb Z) \overset{(3)}{\equiv} \mathtt{FSP}(\mathbb Q) \, ,
  \end{align*}
  where, as noted earlier, $(2)$ is an equivalence assuming the weak $p$-adic Schanuel Conjecture. Indeed, suppose $\LRS{u}$ is an LRS satisfying~\eqref{eqn:Binet}. We may compute a bound $b_\text{max}$ on the largest denominator of any rational zero of $\LRS{u}$ using~\cite[Proposition 4.7]{bilu_twisted_2025}.  For the reduction (1) we note that deciding $\mathtt{SP}(\mathbb Q)$
  is equivalent to deciding $\mathtt{SP}(\mathbb Z)$ on every LRS $\LRS{v}$ defined by
$v_n:=  \sum_{i=1}^s  \textstyle q_i(\frac{n}{b}) (\lambda^{1/b}_i)^n$, where
$1\leq b \leq b_\text{max}$.  The reduction (3) is similar.
\subsection{The \texorpdfstring{$p$}{p}-adic Skolem Problem}
The main contributions of this paper concern the $p$-adic zeros of an  LRS, that is, zeros lying in a $p$-adic completion $\mathbb Z_p$ of the integers with respect to a given prime $p$.
Roughly speaking, the idea is to extend a given LRS $\LRS{u} = \langle u_n\rangle_{n=0}^\infty$ to a map $f:\mathbb Z_p\rightarrow \mathbb Z_p$
such that $f(n)=u_n$ for all $n\in \mathbb Z$.  We then determine whether $f$ has any zeros in $\Z_p$ and, if yes, we compute exact representations and approximate them to arbitrary precision. By an exact representation of a $p$-adic zero $x$ of $f$ we mean a disc $D$ in $\Z_p$ such that $x$ is the unique zero of $f$ in $D$.\footnote{This notion may be compared
to exact representations of algebraic numbers given by their minimal polynomial and an approximation that distinguishes the number from other roots of the minimal polynomial.}
It turns out that $f$ can have $p$-adic zeros other than the integer or rational zeros of the original LRS $\LRS{u}$.  Hence the ability to determine the existence 
of $p$-adic zeros does not directly solve Skolem's Problem.
Nevertheless, as our results and experiments show, working $p$-adically offers a practical approach to finding integer zeros of an LRS that can moreover find all integer zeros subject to additional assumptions and hypotheses.

By way of example, consider the ring $\mathbb Z_3$ of 3-adic integers,
which is the Cauchy completion of $\mathbb Z$ with respect to the absolute value $|\cdot|_3$.  The latter is defined by writing
$|a|_3:=3^{-k}$, where $k$ is the order of 3 as a divisor of $a\in \mathbb Z$ (the larger $k$, the smaller the absolute value).
It turns out that the LRS $u_n=4^n+2$ extends uniquely to a continuous map $f:\mathbb Z_3\rightarrow \mathbb Z_3$.  It is not difficult to see that $f(\frac{1}{2})=0$.
Indeed, write $n_k:=\frac{1+3^k}{2}$ for all $k\in \mathbb N$.  Then
$\lim_{k\rightarrow\infty} n_k=\frac{1}{2}$  in $\mathbb Z_3$.
On the other hand, it can be shown by induction that $u_{n_k} \equiv 0 \bmod {3^{k+1}}$ for all $k$ and hence $\lim_{k\rightarrow\infty}u_{n_k}=0$ in $\mathbb Z_3$.
By continuity we conclude that $f(\frac{1}{2})=0$.

In general, given an LRS $\LRS u$ satisfying~\eqref{eqn:LRS_def}, pick a prime $p \in \N$ such that $p$ does not divide the last coefficient $a_0$ of the recurrence~\eqref{eqn:LRS_def}. Then there exists $N \geq 1$ and analytic functions $f_0,\ldots,f_{N-1} : \Z_p \to
\Z_p$ such that for each $0 \leq \ell \leq N-1$ we have $u_{Nn+\ell} = f_\ell(n)$ for all $n\in \mathbb N$.
Let us call a zero $x \in \Z_p$ of one of the above functions $f_\ell$ a \emph{$p$-adic zero of $\LRS{u}$}.
The proof of the Skolem-Mahler-Lech Theorem shows that $\LRS{u}$ has finitely many $p$-adic zeros---but it does not allow one to determine 
whether a given $\LRS{u}$ has any
$p$-adic zeros and, if so, how to compute them (hence the proof does not tell us anything about computing the integer zeros either).  This motivates:
\begin{problem}[The $p$-adic Skolem Problem]
Given a non-degenerate LRS $\LRS{u}$ and prime $p \in \N$ not dividing the last coefficient of the recurrence satisfied by $\LRS{u}$, determine whether $\LRS{u}$ has a $p$-adic zero. 
\end{problem}
The function version of this problem asks to compute a finite representation of all $p$-adic zeros of $\LRS{u}$ (see Definition~\ref{def:spec}), allowing us both to count the number of $p$-adic zeros and to approximate them to arbitrary precision (with respect to the $p$-adic absolute value). 
As shorthand we denote the decision and function problems respectively by $\mathtt{SP}(\Z_p)$ and $\mathtt{FSP}(\Z_p)$. 

We remark that $p$-adic zeros have a concrete interpretation requiring little $p$-adic terminology to understand. A $p$-adic zero can be interpreted as a coherent collection of zeros mod $p^r$ for all $r$, that is, $x \in \Z_p$ is a $p$-adic zero of $\LRS{u}$ if and only if there is an infinite sequence $n_1,n_2, \dots \in \N$ such that $u_{n_r} \equiv 0 \bmod p^r$ and $n_r \equiv n_{r+1} \equiv x \bmod p^r$. We then have $x \in \N$ if this sequence can be taken eventually constant.

It is open whether any of the above-mentioned variants of Skolem's Problem can be reduced to 
$\texttt{SP}(\Z_p)$ or $\texttt{FSP}(\Z_p)$.
The essential problem is that we do not know how to determine in general whether a $p$-adic zero of a given LRS is rational or not.
While our decision procedure can approximate such a zero to arbitrary precision, it cannot in general certify that it is irrational or even non-integer. Note also that not all rational zeros will be $p$-adic zeros for a given prime $p$.
See Sections~\ref{sec:SkolemConjecture} and~\ref{sec:rat_zeros} for further discussion on these points.
%However it can be shown that the Exponential Local-Global Principle, a conjecture due to Skolem,
%yields a reduction of $\mathtt{SP}(\mathbb Z)$ to $\mathtt{SP}(\mathcal O_{\p})$ for the sub-class of LRS with simple characteristic roots (see Section~\ref{sec:SkolemConjecture}). 

\subsection{Main Results}
Our main theoretical result is the following.
\begin{theorem} \label{thm:pDecidable}
Assuming the $p$-adic Schanuel Conjecture, $\mathtt{SP}(\Z_p)$ is decidable and $\mathtt{FSP}(\Z_p)$ is computable.
\end{theorem}
In the case when all characteristic roots of $\LRS{u}$ lie in $\Z_p$ (which occurs for infinitely many primes $p$ by the Chebotarev density theorem), we have implemented the algorithm for $\texttt{FSP}(\Z_p)$ in the \textsc{Skolem} tool \cite{bacik_2025_16794130}.\footnote{\textsc{Skolem} may be experimented with online at \url{https://skolem.mpi-sws.org/?padic}. For the algorithm described in this paper, toggle the switch labelled ``Use p-adic algorithm''.} 

The main technical lemma (also subject to the $p$-adic Schanuel Conjecture) behind the proof of Thm.~\ref{thm:pDecidable} shows that if two exponential polynomials are coprime in the ring of exponential polynomials, then every common $p$-adic zero must be rational (and thus may be found by enumerating the rationals). This lemma has consequences for the \emph{Simultaneous Skolem Problem}: determine whether two LRS have a common integer zero. 
\begin{restatable}{theorem}{SIMP}
Assuming the $p$-adic Schanuel Conjecture, the Simultaneous Skolem Problem is decidable for coprime LRS.
\label{thm:simultaneous}
\end{restatable}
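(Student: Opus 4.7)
The plan is to combine Theorem~\ref{pDecidable} with the key technical lemma underlying its proof, namely that every common $\mathfrak{p}$-adic zero of two coprime exponential polynomials is rational (this lemma itself being conditional on $p$-adic Schanuel). The overall strategy reduces the Simultaneous Skolem Problem for coprime LRS $\boldsymbol{u},\boldsymbol{v}$ to identifying, among the finitely many common $\mathfrak{p}$-adic zeros of the two sequences, those that happen to be non-negative integers.

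First I would choose a prime $\mathfrak{p}$ of good reduction (avoiding the finitely many primes dividing the constant terms of the two recurrences) and invoke $\mathtt{FSP}(\mathcal{O}_{\mathfrak{p}})$ to obtain the finite list of $\mathfrak{p}$-adic zeros of $\boldsymbol{u}$, each specified to arbitrary $\mathfrak{p}$-adic precision. For each such $z$, I would evaluate the $\mathfrak{p}$-adic extension of $\boldsymbol{v}$ at $z$ in order to extract the finite set $Z \subseteq \mathcal{O}_{\mathfrak{p}}$ of common $\mathfrak{p}$-adic zeros. The residue-class structure of the $\mathfrak{p}$-adic extensions $f_0,\ldots,f_{N-1}$ is handled by passing to a common modulus $N = \mathrm{lcm}(N_{\boldsymbol{u}}, N_{\boldsymbol{v}})$ and working class by class. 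By the technical lemma, every $z \in Z$ is rational, and by \cite[Proposition 4.5]{bilu2024twistedrationalzeroslinear} its denominator is bounded by an effectively computable constant $b_{\max}$. I would then apply $\mathfrak{p}$-adic rational reconstruction --- essentially the extended Euclidean algorithm applied to the pair $(z \bmod \mathfrak{p}^k, \mathfrak{p}^k)$ for $k$ increased until the output stabilises --- to recover each $z$ exactly as a reduced fraction $a/b$. Termination of the reconstruction is guaranteed by genuine rationality of $z$ together with the a priori bound $b \leq b_{\max}$. A common natural-number zero then exists iff one of these reconstructed rationals is a non-negative integer $n$, which is verified directly by computing $u_n$ and $v_n$.

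The conceptual heart, and the main obstacle, is the technical lemma forcing rationality of common $\mathfrak{p}$-adic zeros of coprime exponential polynomials: it is here that $p$-adic Schanuel is consumed. Without that lemma, common $\mathfrak{p}$-adic zeros could in principle be transcendental elements of $\mathcal{O}_{\mathfrak{p}}$, and no amount of $\mathfrak{p}$-adic approximation would yield a finite algebraic certificate from which to decide integrality. Once rationality is in hand, the remaining steps reduce to standard effective tools (denominator bounds and rational reconstruction), so modulo the lemma (and thus modulo $p$-adic Schanuel) the entire procedure is unconditionally correct.
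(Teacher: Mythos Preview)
Your approach is sound and relies on the same two pillars as the paper---Theorem~\ref{pDecidable} to produce finitely many $\mathfrak{p}$-adic zeros, and Lemma~\ref{mainLemma} to force every common zero to be rational---but the route differs. The paper does not compute the zeros of $\boldsymbol{u}$ and then test them against $\boldsymbol{v}$; instead it picks a prime $p$ for which both characteristic polynomials split in $\mathbb{Z}_p$, forms the auxiliary LRS $w_n := v_n^2 + p\,u_n^2$, and observes (by a parity-of-valuation argument) that the $p$-adic zeros of $\boldsymbol{w}$ are exactly the common $p$-adic zeros of $\boldsymbol{u}$ and $\boldsymbol{v}$. One then runs $\mathtt{FSP}(\mathcal{O}_{\mathfrak{p}})$ once on $\boldsymbol{w}$; since Lemma~\ref{mainLemma} guarantees every zero is rational, the brute-force rational search already built into Algorithm~\ref{zeroSearch} identifies them all and termination follows from the Newton-polygon count. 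This packaging buys you a single invocation of the machinery and sidesteps the need for explicit denominator bounds or rational reconstruction.

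There is one wrinkle in your write-up worth fixing. You propose first to ``evaluate the $\mathfrak{p}$-adic extension of $\boldsymbol{v}$ at $z$'' to extract the common-zero set $Z$, and only afterwards to rational-reconstruct the elements of $Z$. But that extraction step, as stated, does not terminate: if $z$ is a genuine common zero you can never certify $f_{v,\ell}(z)=0$ from finitely many $\mathfrak{p}$-adic digits. The remedy is to interleave the two steps. For each $\mathfrak{p}$-adic zero $z$ of $\boldsymbol{u}$, run in parallel (i) increasing-precision evaluation of $f_{v,\ell}(z)$, which halts with a non-zero verdict whenever $z\notin Z$, and (ii) rational reconstruction of $z$ with the denominator bound $b_{\max}$ followed by exact evaluation of both sequences at the candidate rational, which halts whenever $z\in Z$ (since then $z\in\mathbb{Q}$ by Lemma~\ref{mainLemma}). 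Exactly one branch terminates for each $z$, so the dovetailed procedure decides membership in $Z$. With this reordering your argument goes through; the paper's $w_n=v_n^2+pu_n^2$ trick is simply a cleaner way to achieve the same effect.
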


% In cases where the characteristic polynomial of an LRS  $\LRS{u}$ splits over $\Z_p$, an in-principle decision procedure for 
% determining the existence of roots of $\LRS{u}$ in $\Z_p$ can be obtained from the result in the doctoral thesis of Mariaule~\cite{Mariaule} that the first-order theory of the structure $(\Z_p,+,\cdot,0,1,E_p)$ is decidable assuming the $p$-adic Schanuel Conjecture (where $E_p(x) = e^{px}$). 
% (Schanuel's Conjecture is used in~\cite{Mariaule} via a desingularisation construction that is similar in spirit to Lem.~\ref{lem:mainLemma}, but which is not practical to implement.)
% To use this result, we observe that
% the $p$-adic interpolations of $\LRS{u}$ are first-order definable over $\Z_p$.
% However, in case the characteristic polynomial does not split over $\Z_p$, it is unclear how to use the result of~\cite{Mariaule}
% to determine whether $\LRS{u}$ has roots in $\Z_p$.  
% In such a case, the characteristic roots of $\LRS{u}$ lie in a finite extension
% $\O_p$ of $\Z_p$\footnote{Define $\K$ to be a number field, with ring of integers $\O$, containing all characteristic roots of our LRS $\LRS{u}$. Let $\p \subseteq \O$ be a prime ideal lying above prime $p \in \N$, define $\O_\p$ to be the valuation ring of $\K$ with respect to $\p$.} and we can define in 
% first-order logic over the structure 
% $(\O_\p,+,\cdot,0,1,E_p)$ interpolations $\LRS{u}$ to $\O_\p$.
% We can thus use the result of \cite{Mariaule} to determine the existence of zeros
% of $\LRS{u}$ in $\O_\p$.  However, this does not tell us whether there is a root in $\Z_p$ itself.

An in-principle decision procedure for $\texttt{SP}(\Z_p)$ may be deduced from the doctoral thesis of Mariaule \cite{Mariaule} that shows the first-order theory of the structure $(\Z_p,+,\cdot,0,1,E_p)$ is decidable assuming the $p$-adic Schanuel Conjecture (where $E_p(x) = e^{px}$). The $p$-adic Schanuel Conjecture is used in \cite{Mariaule} via a desingularisation construction; given a zero $x$ of an exponential polynomial, there exists a system of exponential polynomials of which $x$ is a non-singular zero. This approach is not practical to implement however, as the algorithm entails enumerating systems of integer multivariate polynomials. In contrast, our algorithm only requires enumerating rational numbers to check for repeated zeros of a given LRS\@.  Such a search can moreover be done efficiently in practice
by searching for rational numbers close to the approximations of $p$-adic zeros that the algorithm identifies.
Using $p$-adic approximation to speed up the search for integer zeros of LRS is explored further in~\cite{bacik_complexity_2025}. 

We note also that Thm.~\ref{thm:simultaneous} does not follow directly from the result of~\cite{Mariaule} since $\mathbb Z$ is not first-order definable in  $(\Z_p,+,\cdot,0,1,E_p)$, as this would immediately contradict decidability of the latter structure. 

\section{Preliminaries} 
\label{sec:prelims}
\subsection{\texorpdfstring{$p$}{p}-adic numbers}
We briefly recall relevant notions about $p$-adic numbers. More details may be found in algebraic number theory textbooks such as \cite{neukirch_algebraic_1999,robert_course_2000}.

Given a prime number $p$, every non-zero rational number $x \in \Q$ may be written as $x = \frac{a}{b}p^r$ for some integers $a,b$ coprime to $p$, with $b$ non-zero, and $r \in \Z$. We define the valuation $v_p(x) = r$, and $v_p(0) = \infty$.  From this derivation we see that $v_p$ satisfies the ultrametric inequality
$v_p(x+y) \geq \min(v_p(x),v_p(y))$ for all $x,y \in \mathbb Q$.

We define an absolute value on $\Q$ by $|x|_p = p^{-v_p(x)}$.  The ultrametric inequality on $v_p$ translates to the 
strong triangle inequality: $|x+y|_p \leq \max(|x|_p,|y|_p)$ for all $x,y \in \mathbb Q$.
Define the set of $p$-adic numbers $\Q_p$ as the completion of $\Q$ with respect to $|\cdot|_p$. If one completes $\Z$ with respect to $|\cdot|_p$ then one gets the $p$-adic integers $\Z_p$, which may also be defined as the unit disc in $\Q_p$:
\begin{align*}
\Z_p = \{x \in \Q_p : |x|_p \leq 1\} = \{x \in \Q_p : v_p(x) \geq 0 \} \, .
\end{align*}
The absolute value $|\cdot|_p$ on $\mathbb Q_p$ extends uniquely to an absolute value on the algebraic closure $\overline{\mathbb Q_p}$.  Specifically, given $\alpha \in \overline{\mathbb Q_p}$, we define $|\alpha|_p := |N(\alpha)|_p^{1/n}$, where $n$ is the degree of $\alpha$ over $\mathbb Q_p$ and $N(\alpha)$ is the \emph{norm} of $\alpha$, that is the determinant of the $\mathbb Q$-linear transformation 
$\mu_\alpha : \mathbb Q(\alpha)\rightarrow \mathbb Q(\alpha)$ given by $\mu_\alpha(x)=\alpha x$.
We correspondingly extend the valuation $v_p(\cdot)$ to $\overline{\mathbb Q_p}$ by $v_p(\alpha) := \frac{1}{n}v_p(N(\alpha))$.  With these definitions the identity 
$|\alpha|_p=p^{-v_p(\alpha)}$ holds for all $\alpha \in \overline{\mathbb Q_p}$.  While $\overline{\mathbb Q_p}$ is not complete with respect to $|\cdot|_p$, taking the Cauchy completion we obtain a field $\mathbb C_p$ that is both algebraically closed and complete under the extension of $|\cdot|_p$ to $\mathbb C_p$. Denote the valuation ring of $\C_p$ by $\O_{\C_p} = \{x \in \C_p: v_p(x) \geq 0\}$. 
\subsection{Computing with \texorpdfstring{$p$}{p}-adic numbers}
We will require a generalisation of the above discussion to deal with algebraic numbers lying outside of $\Z_p$. Given a number field $\K$, let $\O$ denote its ring of integers. Previous ideas of factorising to define a valuation $v_p$ do not carry over directly as $\O$ is no longer a unique factorisation domain, instead one generalises to consider factorisations of (fractional) ideals. Define a \emph{fractional ideal} of $\K$ to be a non-zero finitely generated $\O$-submodule of $\K$; equivalently, $I$ is a fractional ideal if and only if there is $c \in \K$ such that $cI \subseteq \O$ is an ideal of $\O$. Any fractional ideal has a unique factorisation 
\begin{align*}
    I = \prod_{i=1}^t \p_i^{n_i}
\end{align*}
where $t \in \N$ and each $\p \subseteq \O$ is a prime ideal, $n_i \in \Z$ \cite[p. 22]{neukirch_algebraic_1999}. If $I \subseteq \O$ is an ideal then $n_i \geq 0$. Define the valuation $v_\p$ by $v_\p(a) = n$ if the exponent of $\p$ in the prime factorisation of the ideal $a\O$ is $n$, and $v_\p(0) = \infty$. Any prime ideal $\p$ has $v_\p(p) > 0$ for a unique integer prime $p \in \N$. We say $\p$ ``divides'' or ``lies above'' $p$. Say $p$ \emph{ramifies} in $\K$ if $v_\p(p) =e > 1$; call $e = e_{\p/p}$ the \emph{ramification index}. Define the \emph{residue field degree $f = f_{\p/p}$} by $f = \left[\faktor{\O}{\p}:\faktor{\Z}{p\Z}\right]$. Define the $\p$-adic absolute value by $|x|_\p = p^{-fv_\p(x)}$.\footnote{There are many conventions for the normalization of $\p$-adic absolute values as they are equivalent and induce the same completions.} Denote the completion of $\K$ with respect to $|\cdot|_\p$ by $\K_\p$, which may be embedded into $\C_p$. Let the valuation ring be denoted by 
\begin{align*}
    \O_\p = \{x \in \K_\p: v_\p(x) \geq 0\}\, .
\end{align*}
A \emph{uniformiser} $\pi \in \O_\p$ is an element such that $v_\p(\pi) = 1$. For example, let $\K = \Q(\sqrt3)$ and $\p = (\sqrt{3})\O \subseteq \O$. Then $\pi = \sqrt{3}$ is a uniformiser in $\K_\p$, and the ramification index $e_{\p/p} = 2$. Informally, the uniformiser $\pi$ takes the analogous role in $\O_\p$ that $p$ does in $\Z_p$.

Any element $x \in \O_\p$ can be represented as an infinite series $x = \sum_{n=0}^\infty a_n\pi^n$, where each $a_n \in A$, for $A$ a choice of representatives for the residue field $\faktor{\O}{\p}$. In the algorithm, using this fact, $\pi$ and $A$ are used to enumerate elements $x \in \O_\p$, and to easily compute their valuations $v_\p$, since $v_\p(\sum_{n=0}^\infty a_n \pi^n)$ is equal to $j$, where $j$ is the smallest integer for which $a_j \neq 0$. Practically, since only finitely many digits may be stored, for such computations one works with approximations --- that is, reductions mod $\pi^{r}$. This is equivalent to truncating the series representation, i.e. $\sum_{n=0}^\infty a_n \pi^n \equiv \sum_{n=0}^{r-1} a_n \pi^n \bmod \pi^r$.

Since $\faktor{\O}{\p}$ is a finite field, $\faktor{\O}{\p} \cong \mathbb F_{p^f}$ is generated by (the reduction modulo $\p$ of) a single element $a \in \O$; therefore $A$ may be taken to be of the form 
\begin{align} \label{eqn:reps}
    A = \{b_0 + b_1 a + \dots + b_{f-1}a^{f-1} : b_i \in \{0,1,\dots,p-1\}\} \, .
\end{align}
We will not expand much in this article on the intricacies of computations in $\O_\p$, for us it suffices to, given a number field $\K$ and integer prime $p$, pick a prime ideal $\p \subseteq \O$ lying above $p$, compute a uniformiser $\pi \in \O_\p$, and $A$ of the form \eqref{eqn:reps}. This can be handled using Montes' algorithm, see \cite{nart_okutsu-montes_2011}. In particular one may compute representations of all prime ideals lying above $\p$, compute valuations $v_\p$, and a $p$-integral basis of $\K$, from which one can find $A,e_{\p/p},f_{\p/p}$; a uniformiser $\pi$ can simply be found by enumerating elements until one is found with $v_\p(\pi)=1$.

% Let $\mathbb K$ be a number field and $\mathfrak{p}$ be a prime ideal in its ring of integers $\mathcal O$. 
% Define the absolute value $|\cdot|_{\p}$ in $\mathbb K$ by $|a|_{\p}=N(\mathfrak{p})^{-v_{\p}(a)}$, where 
% $N(\mathfrak{p})$ is the order of the residue field $\mathcal O/\mathfrak{p}$ and $v_{\p}(a)$ is the order of $\mathfrak{p}$ as a divisor of 
% the fractional ideal $a\mathcal O$.  We denote by
% $\mathbb K_{\p}$ the completion of $\mathbb K$ with respect to 
% $|\cdot|_{\p}$ and define $\mathcal O_{{\p}}:= \{ a \in \mathbb K_{\p} : |a|_{\p}\leq 1\}$.
%   An element $a \in \mathcal O_{{\p}}$ can be represented uniquely as an infinite series $a=\sum_{n=0}^\infty a_n\pi^n$ where the $a_n$ lie in a fixed set of representatives of $ \mathcal{O}_{{\p}} / \mathfrak{p}$ and $\pi$ is a \emph{uniformiser}, that is, a generator of the unique maximal ideal of $\mathcal{O}_{{\p}}$.  
% In the special case that $\mathbb K$ is the field $\mathbb Q$ of rational numbers and $\mathfrak{p}$ is the ideal $p\mathbb Z$ for a prime $p$, 
% the above completion yields the field $\mathbb Q_p$ of $p$-adic numbers and the ring $\mathbb Z_p$ of $p$-adic integers.
% We denote by $\mathbb C_p$ the Cauchy completion of the algebraic closure of $\mathbb Q_p$.  For a prime $\mathfrak{p}$ lying over $p$ we can regard
% $\mathbb K_{\p}$ as a subfield of $\mathbb C_p$.

%We are now in a position to formally define the $\p$-adic zeros of an LRS\@.
\subsection{\texorpdfstring{$p$}{p}-adic interpolation of linear recurrence sequences} \label{sec:interpolation}
Consider an LRS $\LRS{u}$ satisfying \eqref{eqn:LRS_def} that is given by the formula~\eqref{eqn:Binet} and let $\mathbb K$ be the field generated by the characteristic roots of $\LRS{u}$. As mentioned in Section \ref{sec:intro}, by \cite[Theorem 1.6]{recurrence} we may effectively decompose $\LRS{u}$ into non-degenerate or identically zero subsequences; we may thus assume in the rest of this article that any given LRS $\LRS{u}$ is non-degenerate.

Pick a prime ideal $\p \subseteq \mathcal O$ lying above prime $p \in \N$ that does not divide the last coefficient $a_0$ of the recurrence for $\LRS{u}$ (so that  
 $v_\p(\lambda_i) = 0$ for all $i$). Choose $N \in \Z_{\geq 1}$ to be the smallest positive integer such that $v_\p(\lambda_i^N-1)>\frac{e}{p-1}$ for all $i$, where $e=e_{\p/p}$ is the ramification index.
 For $\ell \in \{0,1,\ldots,N-1\}$ we 
 define the $\ell$-th $\p$-adic interpolant of $\LRS{u}$ to be the analytic function\footnote{The $p$-adic analytic functions $\exp,\, \log$ have the usual series definitions, see \cite[section 5.4]{robert_course_2000}. In particular, $\log$ maps $1 + \pi^r \O_\p$ to $\pi^r \O_\p$, and $\exp$ converges on and maps $\pi^r \O_\p$ to $1 + \pi^r \O_\p$, when $r > \frac{e_{\p/p}}{p-1}$.}
 \begin{align}\label{eqn:interpolate}
     f_\ell(x) = \sum_{i=1}^s q_i(Nx+\ell) \lambda_i^\ell \exp(x \log \lambda_i^N)
 \end{align}
 with $x \in \O_{\p}$. Then  $f_\ell(n) = u_{Nn+\ell}$ for each $n \in \Z_{\geq 0}$
 and so~\eqref{eqn:interpolate} defines an extension of $\LRS{u}$ to $\mathcal O_{\p}$.
 In fact, the right-hand side of~\eqref{eqn:interpolate} converges for any $x \in \O_{\C_p}$ and we call
 such an $x$ an \emph{extended $p$-adic zero} if $f_\ell(x)=0$ for some $\ell\in\{0,\ldots,N-1\}$. Note that since $f_\ell(n) \in \Z$ for all $n \in \N$, by continuity $f_\ell(x) \in \Z_p$ for all $x \in \Z_p$. Thus we have a power series representation
 \begin{align*}
     f_\ell(x) = \sum_{j=0}^\infty b_j x^j
 \end{align*}
for $b_j \in \Z_p$. We will often implicitly restrict the domain to $\Z_p$ and write that $f_\ell : \Z_p \to \Z_p$ is a $p$-adic interpolant of $\LRS{u}$.

Given characteristic roots $\lambda_1,\dots,\lambda_s \in \K$, say they are multiplicatively dependent if there is an integer tuple $(k_1,\dots,k_s) \in \Z^s$ such that $\lambda_1^{k_1}\dots \lambda_s^{k_s}= 1$. The group of all multiplicative relations has generators which are effectively bounded by Masser's theorem \cite{Masser_1988}, this may be used to find all multiplicative relations (see also \cite{combot_computing_2025} for a polynomial-time algorithm) and hence find a multiplicatively independent set of algebraic numbers $\{\alpha_1,\dots,\alpha_r\}$ such that each $\lambda_i$ may be expressed as a Laurent monomial in $\alpha_1,\dots,\alpha_r$ (they can be expressed in this way by non-degeneracy). Therefore $f_\ell$ may be expressed as a Laurent polynomial with algebraic coefficients
\begin{align*}
    f_\ell(x) = P_\ell(x,\exp(x\log \alpha_1^N),\dots,\exp(x \log \alpha_r^N))\, .
\end{align*}
$P_\ell$ may easily be factored as $P_\ell = R_\ell Q_{\ell,1}^{t_1}\dots Q_{\ell,m}^{t_m}$, where $R_\ell$ is a Laurent monomial in $\exp(x\log \alpha_1^N),\dots,\exp(x \log \alpha_r^N)$ and $Q_{\ell,1},\dots,Q_{\ell,m}$ are irreducible polynomials in $x$, $\exp(x \log \alpha_1^N),\dots,\exp(x \log \alpha_r^N)$. Since $R_\ell$ is never zero, finding zeros of $f_\ell(x)$ reduces to working on each irreducible factor $Q_{\ell,i}$ individually. Note also that $R_\ell$ corresponds to an order-1 LRS; in effect we are factorising $u_{Nn+\ell}$ as a product of irreducible (algebraic-valued) LRS. 

We must still explain how $p$-adic zeros are represented and stored. The following is a key lemma for finding zeros of power series on $\K_\p$ (a proof may be found in e.g. \cite[Thm. 9.2]{conrad_hensel} or \cite[Thm. 27.6]{Schikhof_1985}).
\begin{theorem}[Hensel's Lemma for power series] \label{thm:hensel}
  Let $f$ be a power series with coefficients in $\O_{\p}$ that converges on $\O_{\p}$. If $a \in \O_{\p}$ satisfies 
\begin{align*}
|f(a)|_\p < |f'(a)|_\p^2 
\end{align*}
then there is a unique $\alpha \in \O_{\p}$ such that $f(\alpha) = 0$ and $|\alpha-a|_\p < |f'(\alpha)|_\p$. 
\end{theorem}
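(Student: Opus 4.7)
The plan is to use $p$-adic Newton iteration: set $a_0 := a$ and $a_{n+1} := a_n - f(a_n)/f'(a_n)$, then show this sequence is Cauchy in $\O_{\p}$, converges to the desired zero $\alpha$, and that uniqueness follows from the same Taylor expansion used to drive the iteration.

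\textbf{Step 1: Taylor expansion.} First I would establish that for any power series $f(x)=\sum_{i\geq 0} c_i x^i$ with $c_i \in \O_{\p}$ converging on $\O_{\p}$, one has $f(x+h) = f(x) + f'(x)h + h^2 g(x,h)$ where $g$ is a two-variable power series whose coefficients are $\Z$-combinations of the $c_i$ (namely the $\binom{i}{j}c_i$ with $j \geq 2$). Since $c_i \in \O_{\p}$ and $|x|_{\p}, |h|_{\p} \leq 1$, the ultrametric inequality gives $|g(x,h)|_{\p} \leq 1$ for $x,h\in\O_{\p}$. An analogous expansion $f'(x+h) = f'(x) + h\,\widetilde g(x,h)$ with $|\widetilde g|_{\p} \leq 1$ follows similarly. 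This is the only step that really uses the hypotheses that $f$ has integral coefficients and converges on $\O_{\p}$.

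\textbf{Step 2: Newton induction.} Let $c := |f(a)|_{\p}/|f'(a)|_{\p}^2 < 1$. I would prove by induction on $n$ the three simultaneous bounds
\begin{align*}
\text{(i)\ } |a_{n+1}-a_n|_{\p} \leq |f'(a)|_{\p}\, c^{2^n}, \quad
\text{(ii)\ } |f'(a_n)|_{\p} = |f'(a)|_{\p}, \quad
\text{(iii)\ } |f(a_n)|_{\p} \leq |f'(a)|_{\p}^2\, c^{2^n}.
\end{align*}
Applying Step~1 with $x = a_n$ and $h = -f(a_n)/f'(a_n)$, the first two terms of the Taylor expansion cancel by the definition of the Newton step, so $|f(a_{n+1})|_{\p} \leq |h|_{\p}^2$, yielding (iii) at $n+1$. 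Since $|h|_{\p} < |f'(a)|_{\p}$, the $f'$-expansion gives $|f'(a_{n+1}) - f'(a_n)|_{\p} < |f'(a)|_{\p}$, and (ii) at $n+1$ follows by the ultrametric inequality; then (i) at $n+1$ is immediate from (ii) and (iii).

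\textbf{Step 3: Convergence and uniqueness.} Since $c<1$, bound (i) makes $\{a_n\}$ Cauchy in the complete ring $\O_{\p}$ with limit $\alpha$, and (iii) forces $f(\alpha)=0$; summing (i) yields $|\alpha - a|_{\p} \leq |f'(a)|_{\p}\, c < |f'(a)|_{\p} = |f'(\alpha)|_{\p}$. For uniqueness, given any zero $\beta$ with $|\beta-a|_{\p} < |f'(\beta)|_{\p}$, the $f'$-expansion first forces $|f'(\beta)|_{\p} = |f'(a)|_{\p}$, so $|\beta-\alpha|_{\p} \leq \max(|\beta-a|_{\p}, |\alpha-a|_{\p}) < |f'(\alpha)|_{\p}$. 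Plugging $x = \alpha$, $h = \beta-\alpha$ into Step~1 gives $0 = f'(\alpha)(\beta-\alpha) + (\beta-\alpha)^2 g(\alpha, \beta-\alpha)$; if $\beta\neq\alpha$, dividing by $\beta-\alpha$ forces $|f'(\alpha)|_{\p} \leq |\beta-\alpha|_{\p}$, a contradiction.

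\textbf{Main obstacle.} For polynomials the Taylor identity of Step~1 is trivial; for power series one must justify rearranging the double sum $\sum_i c_i(x+h)^i$ as a convergent power series in $(x,h)$ and verify $\O_{\p}$-integrality of the remainder coefficients. This uses both the ultrametric convergence of $f$ on $\O_{\p}$ and the integrality of binomial coefficients. Once this is in place, the remainder of the argument is a mechanical induction with the ultrametric inequality and is no harder than the classical polynomial case.
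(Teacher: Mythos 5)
The paper states this Hensel-for-power-series lemma without proof, citing the standard reference \cite{robert_course_2000}, so there is no ``paper's proof'' to compare against. Your proof is the standard Newton-iteration argument and it is correct: the base case of the three simultaneous bounds holds with equality, the inductive step follows from the Taylor decomposition $f(x+h)=f(x)+f'(x)h+h^2g(x,h)$ with $|g|_{\p}\leq 1$ (and likewise for $f'$), the ultrametric Cauchy criterion plus completeness of $\O_{\p}$ give the limit $\alpha$, and the uniqueness argument correctly first upgrades $|\beta-a|_\p<|f'(\beta)|_\p$ to $|f'(\beta)|_\p=|f'(a)|_\p=|f'(\alpha)|_\p$ before plugging $h=\beta-\alpha$ back into the Taylor identity. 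You also correctly flag the one nontrivial technical point, namely justifying the reindexing $\sum_i c_i(x+h)^i=\sum_{j}h^j\sum_{i\geq j}\binom{i}{j}c_ix^{i-j}$ in the ultrametric setting using $|c_i|_\p\to 0$ and the integrality of binomial coefficients. Spelled out, that step amounts to absolute/unconditional convergence of the double series, which is automatic once the terms tend to $0$; this is exactly how it appears in Robert's book, so your argument is essentially the textbook proof.
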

The following definition explains how we treat $p$-adic zeros of power series:
\begin{definition}
  A specification of an extended $p$-adic zero $x \in \O_\p$ of an analytic function $f: \O_{\p} \to \O_{\p}$ 
    is an integer $y \in \Z$, such that $|f(y)|_\p < |f'(y)|_\p^2$, and $|x-y| < |f'(y)|_\p$.
\label{def:spec}
\end{definition}
Hensel's Lemma ensures that $x$ is defined uniquely. The proof of Hensel's lemma also provides an efficient method for computing approximations of $x$, via the iteration $x_{n+1} = x_n-f(x_n)/f'(x_n)$ and $x_0 = y$.

The next theorem is a generalisation of Strassman's Theorem, commonly known via the notion of the Newton Polygon; see \cite{robert_course_2000}, page 307.
\begin{theorem}\label{thm:newton_polygon}
Let $p$ be a prime, and let $f(X) = \sum a_n X^n \in \O_{\C_p}[[X]]$ be a nonzero convergent power series. Given $r \geq 0$, suppose $\mu < \nu$ are the extreme indices $n$ for which $v_p(a_n)+nr = \underset{j \geq 0}\min \, v_p(a_j)+jr$. Then $f$ has exactly $\nu-\mu$ zeros (counting multiplicities) on the sphere $\{x \in \C_p : v_p(x) = r\}$.
\end{theorem}
Thm. \ref{thm:newton_polygon} allows us to determine how many $p$-adic zeros we are searching for and provides a certificate when we have found them all.

Finally, we state the $p$-adic Schanuel Conjecture (see~\cite{Calegari_Mazur_2009,Mariaule}). 
\begin{conjecture}[The $p$-adic Schanuel Conjecture]
Let $n \geq 1$ and $t_1, \dots , t_n \in \C_p$ linearly independent over $\Q$, such that $v_p(t_i) \geq \frac{1}{p-1}$ for all $1 \leq i \leq n$). Then 
\begin{align*}
    \mathrm{trdeg}_\Q \ \Q(t_1 , \dots , t_n , \exp(t_1), \dots , \exp(t_n)) \geq n
\end{align*}
where $\mathrm{trdeg}_\Q \ \K$ denotes the transcendence degree of $\K$ over $\Q$.
\end{conjecture}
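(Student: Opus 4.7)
The statement to be proved is the $p$-adic Schanuel Conjecture, a longstanding open problem in $p$-adic transcendence theory that is used as a working hypothesis throughout the paper rather than as something the authors intend to derive. So any proof proposal here is necessarily speculative; I describe what the standard transcendence-theoretic approach would look like and where it breaks down.

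The plan would be to mimic, in the $p$-adic setting, the Schneider–Lang / Baker–Philippon machinery that has successfully established the known special cases (essentially the $p$-adic Lindemann–Weierstrass theorem, and $p$-adic versions of Gelfond–Schneider and of Baker's theorem on linear forms in logarithms). Concretely, assume $t_1,\ldots,t_n \in \mathbb{C}_p$ are $\mathbb{Q}$-linearly independent with $v_p(t_i) > \tfrac{1}{p-1}$ (so that $\exp(t_i)$ converges), and suppose for contradiction that
\[
\mathrm{trdeg}_\Q \, \Q(t_1,\ldots,t_n,\exp(t_1),\ldots,\exp(t_n)) \leq n-1.
\]
Using this deficit, an application of Siegel's lemma over the number field generated by a transcendence basis would produce an auxiliary polynomial $P \in \mathcal{O}[X_1,\ldots,X_n,Y_1,\ldots,Y_n]$, of carefully controlled degree and height, such that the entire function
\[
F(z_1,\ldots,z_n) \;=\; P\bigl(z_1,\ldots,z_n,\exp(z_1),\ldots,\exp(z_n)\bigr)
\]
vanishes to high order at many integer combinations $k_1 t_1+\cdots+k_n t_n$.

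One would then run the usual analytic-arithmetic loop: use $p$-adic analyticity of $\exp$ on the open unit disk (shifted by $\tfrac{1}{p-1}$) together with a Schwarz-style upper bound to extrapolate the vanishing of $F$ to a larger lattice of points, and use the product formula to bound the resulting algebraic numbers from below. A contradiction is then sought either directly, or via a zero-estimate of Nesterenko–Philippon type applied to the subvariety cut out by $P$ on the commutative algebraic group $\mathbb{G}_a^n \times \mathbb{G}_m^n$.

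The main obstacle, and the reason this conjecture is regarded as out of reach, is the zero-estimate step: for the full Schanuel statement one needs a multiplicity estimate on $\mathbb{G}_a^n \times \mathbb{G}_m^n$ whose hypotheses match what Siegel's lemma can produce when the transcendence deficit is as large as $n-1$. Existing $p$-adic zero-estimates (Philippon, Nesterenko, Roy) are strong enough to recover linear forms in logarithms and a handful of other corollaries, but they do not cover the generic case, and no alternative construction (interpolation determinants, slope methods) is currently known to close the gap. I would therefore not expect this plan to terminate in a proof, and the correct reading of the statement in the paper is as a hypothesis whose plausibility is supported by the classical Schanuel heuristics and by its compatibility with all known $p$-adic transcendence results.
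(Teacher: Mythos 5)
You are right that this statement is a conjecture, not a theorem of the paper: the authors state the $p$-adic Schanuel Conjecture (with references to Calegari--Mazur and Mariaule) purely as a hypothesis under which termination of their algorithms is guaranteed, and they offer no proof, so there is nothing in the paper against which a proof attempt could be compared. Your refusal to claim a proof, together with the sketch of why the Schneider--Lang/Philippon machinery and existing $p$-adic zero-estimates fall short of the full statement, is an accurate reading of the situation and matches the paper's treatment of the conjecture as an open working assumption.
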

\section{Decidability of the \texorpdfstring{$p$}{p}-adic Skolem Problem}
\subsection{Informal Outline of the Algorithm}
In this section we prove Thm.~\ref{thm:pDecidable}, assuming the $p$-adic Schanuel Conjecture. 
We start with an informal description of an algorithm that attempts to find the 
$p$-adic zeros of an LRS $\LRS{u}$ using only brute-force search and Hensel's Lemma.  We note the problem with this approach, which motivates 
the subsequent development involving the $p$-adic Schanuel Conjecture.

Let $f_\ell :\Z_{p}\rightarrow \Z_p$ be an interpolant of a given LRS $\LRS{u}$.  
We would like to compute specifications of all the zeros of $f_\ell$ in $\Z_p$. 
The idea is to search for zeros lying in the residue classes of $\Z_p$ modulo $p^r$ for $r=1,2,3,\ldots$.
Consider a representative $z \in \Z_p$ of a residue class of $\faktor{\Z_p}{p^r \Z_p}$. Since $\faktor{\Z_p}{p^r \Z_p} \cong \faktor{\Z}{p^r \Z}$ (e.g. \cite[Ch. II, Prop. 2.4]{neukirch_algebraic_1999}), $z$ may be taken to be an integer in $\{0,1,\dots,p^r-1\}$.

If $f_\ell(z) \not\equiv 0 \bmod p^r$ then the residue class of $z$ does not contain a zero of $f_\ell$, and we can proceed to search other residue classes.  If $f_\ell(z) \equiv 0 \bmod p^r$ and $v_p(f_\ell(z))>2v_p(f_\ell'(z))$ then the residue class contains a unique zero of $f_\ell$ by Hensel's Lemma (Thm. \ref{thm:hensel}). If neither of the above cases hold then the residue class may contain any number of zeros of $f_{\ell}$. Note that we can use Thm.~\ref{thm:newton_polygon} to determine the exact number of extended $p$-adic zeros (lying in the extension $\mathbb C_p$ of $\Z_p$) of $f_{\ell}$ in the residue class of $z$.\footnote{Let $f_\ell(x)=\sum_{n=0}^\infty a_n(x-z)^n$ be a power-series expansion of $f_\ell$ around~$z$ and calculate the respective smallest and largest indices $\mu<\nu$ for which $v_{p}(a_n)+nr= \underset{j \geq 0}\min\, v_{p}(a_j)+jr$ by computing each derivative $f_\ell^{(j)}(z) \bmod p^{k}$ for increasing powers $k$.  By Thm.~\ref{thm:newton_polygon} there are $\nu-\mu$ extended $p$-adic zeros in the same residue class as $z$ modulo $p^r$.} If this number is zero then we can again proceed to consider other residue classes.  If the number is positive then we can refine our search by looking at residue classes modulo $p^{r+1}$ contained in the current class. If all the zeros of $f_{\ell}$ in $\Z_p$ are simple then this search will eventually terminate.  However, if there is a zero in $\Z_p$ of multiplicity two or more then the search will run forever (as the inequality $v_\p(f_\ell(z))>2v_\p(f_\ell'(z))$ will never hold).  The key challenge is thus to identify multiple zeros of $f_{\ell}$ in $\Z_p$ and determine their multiplicity.  

%Let $f_\ell(x)=\sum_{n=0}^\infty a_n(x-z)^n$ be a power-series expansion of $f_\ell$ around~$z$ and 
%calculate the respective smallest and largest indices $\mu<\nu$ for which $v_{\p}(a_n)+nr= \underset{j \geq 0}\min\, v_{\p}(a_j)+jr$ 
%by computing each derivative $f_\ell^{(j)}(z) \bmod \p^{k}$ for increasing powers $k$.
%By Theorem~\ref{thm:newton_polygon} there are $\nu-\mu$ extended $\p$-adic zeros in the same residue class as $z$ modulo $\pi^r$.
%If $\nu=\mu$ then the residue class contains no zeros of $f_\ell$ and we make the class a leaf of the search tree.  Otherwise we continue and repeat the process on each residue class mod $\p^{r+1}$ inside the residue class $z \bmod \p^{r}$.

%Whenever we detect a residue class with only one extended $\p$-adic zero of $f_\ell$, represented by $z\in \mathcal O_{\p}$, we can check the condition $|f(z)|_\p < |f'(z)|_\p^2$ in Hensel's Lemma.  If the condition holds, then this gives a specification of a $\p$-adic zero and we are done with this residue class. Otherwise, we look at deeper residue classes.

%The algorithm stops when it has checked all residue classes. However the problem with this procedure is that Hensel's Lemma can only detect zeros of multiplicity one. If $f_\ell$ has a zero of multiplicity two then the algorithm will keep on checking residue classes mod $\p^k$ for increasing powers $k$ and thereby fail to terminate.

It so happens that for irreducible factors of $f_\ell$, assuming $p$-adic Schanuel's Conjecture, the only possible zeros of multiplicity two or more are rational. This is the content of the main technical results in this section and it allows us to amend the above algorithm so that it always terminates.  Specifically, in parallel with the above-described process, we search by enumeration for rational zeros of $f_\ell$.
We thus find all $p$-adic zeros either by specifying them with Hensel's Lemma or by enumerating rationals and checking directly. We use Thm.~\ref{thm:newton_polygon} to certify that all zeros have thereby been found.
\subsection{Simultaneous Zeros of Coprime Exponential Polynomials}
In this section we prove our main technical result. 
We prove that every common $p$-adic zero of two coprime exponential polynomials is rational, under appropriate conditions, assuming the $p$-adic Schanuel Conjecture.  
It is interesting to compare this result with \cite[Proposition 3.4]{Chonev_zeros_2023}, which shows that two coprime exponential polynomials 
over $\mathbb C$ in which all numerical constants are algebraic have no common zeros whatsoever, assuming Schanuel's Conjecture over $\mathbb C$.
%which proves (assuming Schanuel's conjecture) that there are no common zeros of coprime exponential polynomials which are comprised of terms of the form $e^{\lambda x}$ for algebraic %$\lambda$. However, in our case we have terms of the form $\exp(x\log \lambda)$ for algebraic $\lambda$, which changes how we apply Schanuel's conjecture. In fact, in our case, we prove %that common zeros of coprime exponential polynomials can exist, but they must always be rational.

Let $\K$ be a number field with ring of integers $\O$, and $\p \subseteq \O$ be a prime ideal lying above prime $p \in \N$. Choose $\lambda_1,\dots,\lambda_s \in \O$ multiplicatively independent such that the $\p$-adic logarithms $\log \lambda_1, \dots, \log \lambda_s$ are defined. Note then that $\log \lambda_1, \dots, \log \lambda_s$ are linearly independent over $\Q$. Let $P \in \K[x,y_1,\dots,y_s,z_1,\dots,z_s] = \K[x,\bm y, \bm z]$ be a polynomial. Define its associated exponential polynomial by
\begin{align*}
    f_P(x) = P(x,\log \lambda_1, \dots, \log \lambda_s, \exp(x\log \lambda_1) , \dots, \exp(x\log \lambda_s)) \, .
\end{align*}

We first prove a preparatory lemma.
\begin{lemma} \label{lem:prep-lem}
Let $t\in \O_{\C_p}$, and let 
\begin{align*}
    T = \{t , \log \lambda_1, \dots, \log\lambda_s, \exp(t \log \lambda_1), \dots, \exp(t \log \lambda_s)\} \, .
\end{align*}
Then assuming the $p$-adic Schanuel Conjecture, 
\begin{enumerate}[(i)]
    \item If $\mathrm{trdeg}_\Q \Q(T) \leq 2s-1$, then there exist $a_i,b_i \in \Z$ not all zero such that $\sum_{i=1}^s (a_i + b_i t)\log \lambda_i = 0$,
    \item If $\mathrm{trdeg}_\Q \Q(T) \leq 2s-2$, then $t \in \Q$.
\end{enumerate}
\end{lemma}

\begin{proof}
Let $S:=\{\log \lambda_1,\ldots,\log \lambda_s,t\log \lambda_1,\ldots,t\log \lambda_s\}$.  Then
     $\mathrm{trdeg}_{\mathbb Q}(T) =
\mathrm{trdeg}_{\mathbb Q}(S\cup \exp(S))$.
For Part~(i),
suppose that $\mathrm{trdeg}_{\mathbb Q}(T) \leq 2s-1$. Then
$\mathrm{trdeg}_{\mathbb Q}(S\cup \exp(S)) \leq 2s-1$ and hence,
by the $p$-adic Schanuel Conjecture, there is a non-trivial rational linear relation
$\sum_{i=1}^s (a_i+b_it)\log\lambda_i=0$, and by clearing denominators we can take $a_i,b_i \in \Z$.

For Part~(ii),
suppose that $\mathrm{trdeg}_{\mathbb Q}(T) \leq 2s-2$. Observe that if $b_i = 0$ for all $i$, then $\sum_{i=1}^s a_i \log \lambda_i = 0$ contradicts the $\Q$-linear independence of $\log \lambda_1, \dots, \log \lambda_s$. Therefore, we may assume without loss of generality that $b_1 \neq 0$. If $S' = S \setminus\{t \log \lambda_1\}$ were $\Q$-linearly independent, then the $p$-adic Schanuel Conjecture would imply $\mathrm{trdeg}_\Q \Q(T) \geq \mathrm{trdeg}_\Q \Q(S' \cup \exp(S')) \geq 2s-1$, which contradicts the assumption that $\mathrm{trdeg}_\Q \Q(T) \leq 2s-2$. Thus, $S'$ is not $\Q$-linearly indepndent, meaning there are two independent $\mathbb Q$-linear relations on $S$:
$\sum_{i=1}^s (a_i+b_tt) \log \lambda_i=0$ and $\sum_{i=1}^s (c_i+d_it)\log \lambda_i=0$, where $b_1 \neq 0$ and $d_1 = 0$.

Write
\[ A:=\sum_{i=1}^s a_i \log \lambda_i \quad
  B:=\sum_{i=1}^s b_i \log \lambda_i \quad
  C:=\sum_{i=1}^s c_i \log \lambda_i \quad
  D:=\sum_{i=1}^s d_i \log \lambda_i \, . \]
Then $A+tB=C+tD=0$ and so
$t(AD-BC)=0$.  We may suppose that $t \neq 0$, and thus that 
$AD-BC=0$.  By the $p$-adic Schanuel conjecture, $\log \lambda_1,\ldots,\log \lambda_s$ are algebraically independent, so
the polynomial
\[ \left(\sum_{i=1}^s a_iX_i \right)\left(\sum_{j=1}^s d_jX_j \right)- \left(\sum_{i=1}^s b_iX_i \right) \left(\sum_{j=1}^s c_jX_j\right)  
  \]
  is identically zero.

By unique factorisation in $\mathbb Q[X_1,\ldots,X_s]$,
we either have that $(d_1,\ldots,d_s)=q(b_1,\ldots,b_s)$
or $(d_1,\ldots,d_s)=q(c_1,\ldots,c_s)$ for some rational number $q$.

The former case is impossible as $b_1 \neq 0$ and $d_1 = 0$, while in the latter case we have $C+tD = \sum_{i=1}^s c_i (1+qt)\log \lambda_i = 0$. If $1+qt = 0$ then $t \in \Q$ as required; otherwise $\sum_{i=1}^s c_i \log \lambda_i = 0$, which contradicts $\Q$-linear independence of $\log \lambda_1, \dots, \log \lambda_s$.

% In the former case we have $A+tB=\sum_{i=1}^s a_i(1+qt) \log \lambda_i = 0$.
% If $1+qt=0$ then $t$ is rational.  Otherwise $a_i=0$ for $i\in\{1,\ldots,s\}$, which yields
% a non-trivial linear relation $\sum_{i=1}^s b_i\log \lambda_i=0$, contradicting the 
% rational linear independence of $\log \lambda_1,\ldots,\log \lambda_s$. 
% In the latter case we have that
% $A+tB-q(C+tD) = \sum_{i=1}^s (b_i-qd_i)t \log \lambda_i =0$ is a
% non-trivial linear relation, again contradicting the rational linear
% independence of $\log \lambda_1,\ldots,\log \lambda_s$.
\end{proof}

Our main result is the following.\footnote{This corrects the statement published in the conference proceedings \cite[Lemma 10]{bacik_padic_2026}. That statement is incorrect as evidenced by the counterexample $P = z_1 - \lambda_2$ and $Q = z_1 - \lambda_2 + y_2 - xy_1$, with $f_P\left(\frac{\log \lambda_2}{\log \lambda_1} \right) = f_Q \left(\frac{\log \lambda_2}{\log \lambda_1} \right) = 0$.}

\begin{restatable}{theorem}{mainLemma}\label{lem:mainLemma}
Let $P,Q \in \K[x,\bm y, \bm z]$ be coprime polynomials with associated exponential polynomials $f_P,f_Q$, and assume each $z_i$ appears with positive degree in either $P$ or $Q$. Then assuming the $p$-adic Schanuel Conjecture, for any $t \in \O_{\C_p}$,
\begin{enumerate}[(i)]
    \item if $f_P(t) = f_Q(t) = 0$, then there exist $a_i,b_i \in \Z$ not all zero such that $\sum_{i=1}^s (a_i + b_i t)\log \lambda_i = 0$,
    \item if $P,Q \in \K[x,\bm z]$ then $t \in \Q$,
    \item if $P \in \K[x,\bm z]$ is irreducible and $f_Q = f_P'$, then $t \in \Q$.
\end{enumerate}
\end{restatable}

\begin{proof}
\textbf{Part (i):} Suppose $t \in \O_{\C_{p}}$ satisfies $f_P(t) = f_Q(t) = 0$. Assume throughout the proof that $t \not \in \Q$, otherwise all parts of the theorem trivially hold, and define
\begin{align*}
    \bm \tau = (t,\log \lambda_1, \dots, \log \lambda_s, \exp(t \log \lambda_1), \dots , \exp(t \log \lambda_s)) \, .
\end{align*}

% First, assume that the set $S = \{\log \lambda_1, \dots , \log \lambda_s, t \log \lambda_1, \dots , t \log \lambda_s\}$ is linearly independent over $\Q$. By the $p$-adic Schanuel Conjecture,
% \begin{align}\notag
% &\mathrm{trdeg}_\Q \Q(S \cup \{ \exp(t\log\lambda_1) , \dots , \exp(t\log\lambda_s) , \lambda_1, \dots , \lambda_s\})\\ \label{eqn:trdeg1}
%     =\; &\mathrm{trdeg}_\Q \Q(t,\log \lambda_1 , \dots , \log \lambda_s, \exp(t\log\lambda_1) , \dots , \exp(t\log\lambda_s)) \geq 2s \, .
% \end{align}

Now $f_P(t) = f_Q(t) = 0$ implies that $T = \{t, \log \lambda_1, \dots , \log \lambda_s, \exp(t \log \lambda_1), \dots , \exp(t \log \lambda_s)\}$ is comprised of at most $2s-1$ algebraically independent elements. Indeed, pick some element $\sigma$ of $\{x,y_1,\dots,y_s,z_1,\dots,z_s\}$ with positive degree in $P$. Then $f_P(t)=0$ implies that the component of $T$ corresponding to $\sigma$ is algebraic over the remaining components of $T$. Now $f_Q(t) = 0$ implies that the remaining components of $T$ are algebraically dependent. Indeed, if $\sigma$ does not appear in $Q$ then this is obviously true, otherwise since $P$ and $Q$ are coprime, the multivariate resultant $\text{Res}_\sigma(P,Q)$ is a non-zero polynomial in the remaining components of $\{x,y_1, \dots,y_s,z_1,\dots,z_s \}\setminus\{\sigma\}$ with a zero at $ \bm \tau$ (see, e.g., \cite[pp.~163--164]{cox_ideals_2015}). Thus
\begin{align} \label{eqn:trdeg2}
    \mathrm{trdeg}_\Q \Q(t,\log \lambda_1 , \dots , \log \lambda_s, \exp(t\log\lambda_1) , \dots , \exp(t\log\lambda_s)) \leq 2s-1
\end{align}
so by \cref{lem:prep-lem} (i), we have $a_i,b_i \in \mathbb Z$ not all zero such that $\sum_{i=1}^s (a_i + b_i t) \log \lambda_i = 0$, which proves item 1 of the theorem.

Now, for parts (ii),(iii) of the theorem, by \cref{lem:prep-lem} (ii) it suffices to prove that $\mathrm{trdeg}_\Q \Q(T) \leq 2s-2$. Let $R := \sum_{i=1}^s (a_i + b_i x)y_i$. Observe that if $b_i =0$ for all $i$, then $\sum_{i=1}^s a_i \log \lambda_i = 0$ contradicts $\Q$-linear independence of $\log \lambda_1, \dots , \log \lambda_s$, so without loss of generality $b_1 \neq 0$.

\textbf{Part (ii):} in the same way as the proof of part 1, $f_P(t) = f_Q(t) = 0$ implies that 
\begin{align*}
    \mathrm{trdeg}_\Q \Q(t,\exp(t\log \lambda_1),\dots, \exp(t\log \lambda_s)) \leq s-1
\end{align*}
and $f_R(t) = 0$ implies that some $\log \lambda_i$ is algebraic over $\{t,\log \lambda_1,\dots,\log \lambda_s\} \setminus \{\log \lambda_i\}$. These two facts together entail that $\mathrm{trdeg}_\Q \Q(T) \leq 2s-2$, so we are done by \cref{lem:prep-lem} (ii).

\textbf{Part (iii):} in this scenario we have
\begin{align*}
Q = \partial_xP + \sum_{i=1}^s y_i z_i \partial_{z_i}P \, .
\end{align*}
We may assume that each $z_1,\dots,z_s$ appears in $P$ with positive degree, as otherwise we simply pass to the appropriate subring of $\K[x,\bm y, \bm z]$ containing only the $z_i$ appearing in $P$.\footnote{Note that if no $z_i$ appears in $P$ then $P = f_P$ is irreducible, so $f_Q(t) = f_P'(t) = 0$ is impossible.} Thus, every $y_i$ appears with positive degree in $Q$. Also, since $b_1 \neq 0$, we also have that $y_1$ appears with positive degree in $R$. Then we may compute
\begin{align*}
\mathrm{Res}_{y_1}(Q,R) = \left(\sum_{i=2}^s (a_i+b_ix)y_i \right)z_1 \partial_{z_1}P - (a_1 + b_1 x) \left( \partial_x P + \sum_{i=2}^s y_i z_i \partial_{z_i} P \right) \, .
\end{align*}
Suppose, for a contradiction, that $P$ and $\mathrm{Res}_{y_1}(Q,R)$ are not coprime. Then $P \mid \mathrm{Res}_{y_1}(Q,R)$, and so, setting all $y_1 = \dots = y_s = 0$, we have 
\begin{align*}
P \mid (a_1+b_1 x) \partial_x P \, .
\end{align*}
However, since $z_1$ appears in $P$ with positive degree, $P \nmid (a_i +b_i x)$, so\footnote{Recall that in UFDs such as polynomial rings over a field, irreducible elements are prime.} $P \mid \partial_x P$; but since $\deg_x \partial_x P < \deg_x P$, this implies $\partial_x P = 0$.

Now, for each $j \in \{2, \dots ,s \}$, setting $y_i = 0$ for all $i\neq j$ implies
\begin{align*}
    P \mid y_j \left( (a_j+b_j x) z_1 \partial_{z_1} P - (a_1 + b_1 x) z_j \partial_{z_j} P \right)  \, .
\end{align*}
Since $P \nmid x$ and $P \nmid y_j$, we have
\begin{align*}
P \mid (a_j z_1 \partial_{z_1} P - a_1 z_j \partial_{z_j} P)  \quad , \quad P \mid (b_j z_1 \partial_{z_1} P - b_1 z_j \partial_{z_j} P)
\end{align*}
which implies
\begin{align*}
P \mid (a_j b_1 - a_1 b_j)z_j \partial_{z_j}P \, .
\end{align*}
Now if $a_jb_1 - a_1 b_j = 0$ for all $j$, we would have $a_j/b_j = a_1/b_1$ for all $j$, and so $(a_1 + b_1 x)$ would divide $R$. Then $f_R(t) = 0$ implies that $a_1 + b_1 t = 0$, so $t \in \Q$. 

Otherwise, there exists $j$ such that $a_j b_1 - a_1 b_j \neq 0$, so $P \mid z_j \partial_{z_j}P$. If $P \mid z_j$ then $P = z_j$ contradicting $f_P(t) = 0$. So we must have $P \mid \partial_{z_j} P$, which is impossible as $\deg_{z_j} \partial_{z_j} P < P$. 

Thus, we have proven that $P$ and $\mathrm{Res}_{y_1}(Q,R)$ are coprime. We are finally ready to conclude the proof; $f_R(t) = 0$ implies that some $\log \lambda_i$ is algebraic over $\Q(T \setminus \{\log \lambda_i\})$. Next, $\mathrm{Res}_{y_1}(Q,R)$ is a non-zero polynomial which is zero at the point $\bm \tau$, showing that some $\alpha \in T \setminus \{\log \lambda_i\}$ is algebraic over $\Q(T\setminus \{\log \lambda_i,\alpha\})$. Finally, let $\sigma \in \{x,y_1,\dots,y_s,z_1,\dots,z_s\}$ be the variable corresponding to $\alpha$. If $\sigma$ does not appear in $P$ with positive degree, then $f_P(t) = 0$ implies some element of $T\setminus \{\log \lambda_i, \alpha\}$ is algebraic over the remaining elements, which proves that $\mathrm{trdeg}_\Q \Q(T) \leq 2s-2$ as required. 

Otherwise, if $\sigma$ does appear in $P$ with positive degree, then instead the vanishing of the non-zero polynomial $\mathrm{Res}_\sigma(P,\mathrm{Res}_{y_1}(Q,R))$ at $\bm \tau$ proves that $\mathrm{trdeg}_\Q \Q(T) \leq 2s-2$, since $\mathrm{Res}_\sigma(P,\mathrm{Res}_{y_1}(Q,R))$ does not depend on $\sigma$ or $y_1$. 
\end{proof}
There are several interesting consequences.
\begin{remark}
Call $f_P$ an exponential polynomial with algebraic coefficients if $y_i$ does not appear in $P$ with positive degree for $1 \leq i \leq s$. Then \cref{lem:mainLemma} shows that any irrational zero $t$ of an exponential polynomial with algebraic coefficients has a unique irreducible exponential polynomial with algebraic coefficients that it is a root of (if $t$ is algebraic this is just its normal minimal polynomial). Therefore, just as in the case of algebraic numbers, we may associate to any irrational \emph{exponential-algebraic number}  (ie., a root of an exponential polynomial with algebraic coefficients) a unique \emph{minimal} exponential polynomial with algebraic coefficients that it is a root of.
\end{remark}
\begin{remark} \label{rem:complex_remark}
  The result also holds if we work instead with Schanuel's Conjecture on elements of $\C$, and logarithms and exponentials over $\C$.
\end{remark}
\begin{remark} \label{rem:algebraicZeros}
By taking $Q$ to be an integer polynomial in $x$ in the statement of \cref{lem:mainLemma}, one sees that irrational algebraic numbers cannot be zeros of irreducible exponential polynomials that are not polynomials in the usual sense.
\end{remark}

The application of this result to LRS is the following corollary.
\begin{restatable}{corollary}{multCor} \label{cor:multCor}
Let $\K$, $\lambda_1, \dots , \lambda_s$ be as in \cref{lem:mainLemma}. Let $P \in \K[x, \bm y, \bm z]$ be non-zero and irreducible and such that $y_1, \dots y_s$ do not appear with positive degree in $P$. Assuming the $p$-adic Schanuel Conjecture, 
if $t \in \O_{C_{p}}$ is a zero of $f_P$ with multiplicity $\geq 2$ then $t \in \Q$. 
\end{restatable}
\begin{proof}
Let $Q \in \K[x,\bm y, \bm z]$ be such that $f_Q = f_P'$, then as soon as we prove $P,Q$ are coprime, the result follows from item 3 of \cref{lem:mainLemma}, as $t$ being a zero of $f_P$ with multiplicity 2 implies $f_P(t) = f_P'(t) = 0$. 

Since $P$ is irreducible, $P,Q$ only fail to be coprime if $P \mid Q$. Note that
\begin{align*}
    Q = \partial_xP + \sum_{i=1}^s y_i z_{i} \partial_{z_i}P
\end{align*}
Since $P$ has no $y_i$ component for each $1 \leq i \leq s$, we have $P|_{y_i = 0, \, i \in I}=P$ is a non-zero polynomial dividing $Q|_{y_i = 0, \, i \in I}$ for any subset $I \subseteq \{1, \dots , s\}$. In particular this implies that $P \mid \partial_xP$, hence $\partial_xP = 0$ since it has strictly lower total degree than $P$. Now using $I = \{1, \dots , s\} \setminus i$ for each $i$, we also have $P \mid y_i z_{i}\partial_{z_i}P$. If $P \mid z_i$ then $f_P$ has no zeros, contradicting our starting assumption. Therefore we must have $P \mid \partial_{z_i}P$, implying $\partial_{z_i}P = 0$ since it has strictly smaller total degree than $P$. We conclude that $P$ must be constant as all of its partial derivatives vanish. This is impossible as we assume $f_P$ has a zero, so $P$ and $Q$ are coprime.
\end{proof}
We may now ask, does there exist a rational zero of an exponential polynomial $f$ with irreducible underlying multivariate polynomial, with multiplicity greater than 1? The following example shows the answer is yes.  
\begin{example}
Let $Q(x,y) = x^2(x-1) + y^2$. Then $(x-1)$ is a prime ideal in $\overline \Q[x]$ which divides $x^2(x-1)$ exactly once and does not divide the coefficient of $y^2$, so by Eisenstein's criterion $Q(x,y)$ is irreducible in $\overline \Q[x][y] = \overline \Q[x,y]$. Now let $f(x) = Q(\exp(x \log 2)-1,\exp(x \log 3)-1)$. It is easy to see that $0$ is a multiplicity 2 root of $f$.
\end{example}
\subsection{The Algorithm and the Proof of Termination and Correctness}
With Corollary \ref{cor:multCor} in hand, we may amend the algorithm outlined earlier for finding the $p$-adic zeros of an LRS. First we define a recursive subroutine given by Algorithm \ref{alg:zeroSearch}, which finds specifications of all the zeros in $\O_\p$ of a suitable analytic function $f: \O_\p \to \O_\p$.

Next, the idea is as follows. Following the process explained in Section \ref{sec:interpolation}, we find the $p$-adic interpolants of a given LRS $\LRS{u}$, that is, analytic functions interpolating each subsequence $u_{Nn+\ell}$ for each $0 \leq \ell \leq N-1$, for some integer $N$. Since any interpolant $f$ maps $\Z$ to $\Z$, we have by continuity that $f$ maps $\Z_p$ to $\Z_p$. We compute a multiplicatively independent set of algebraic numbers $\{\alpha_1, \dots, \alpha_s\}$ that generates the characteristic roots of $\LRS{u}$ as explained in Section \ref{sec:interpolation}, and for each interpolant $f$ compute the associated Laurent polynomial $P$ such that $f(n) = P(n,\alpha_1^{Nn},\dots,\alpha_s^{Nn})$ for all $n \in \Z$.

Now, factorise $P = RP_1^{r_1}\dots P_m^{r_m}$, where $P_1,\dots,P_m$ are irreducible polynomials, and $R$ is some Laurent monomial. Next, pass the associated analytic function $f_{P_i} : \O_\p \to \O_\p$ to Algorithm \ref{alg:zeroSearch} to find specifications of all extended $p$-adic zeros in $\O_\p$ of each $f_{P_i}$. Unfortunately, we cannot guarantee that $f_{P_i}$ maps $\Z_p$ to $\Z_p$ so we require one more step to identify the $p$-adic zeros (i.e. those in $\Z_p$). Note however, that for an irrational zero $x \in \O_\p$ of $f_{P_i}$ that has multiplicity 1, $x$ will also be a zero of $f^{(r_i-1)} : \Z_p \to \Z_p$ of multiplicity 1. Therefore for a suitable approximation $y$ of $x$, we can identify $x$ as a zero of $f^{(r_i-1)}$ using Hensel's lemma when we certify $v_p(f^{(r_i-1)}(y)) > 2v_p(f^{(r_i)}(y))$. Since $f^{(r_i-1)}$ does map $\Z_p$ to $\Z_p$, if $y \in \Z_p$ then Hensel's lemma certifies that $x \in \Z_p$. Otherwise, if $x \in \O_p \setminus \Z_p$, one can certify this by computing its $\p$-adic series expansion. Thus, to find all the zeros of $f$ in $\Z_p$, simply run through each zero $x \in \O_p$ of each $f_{P_i}$, and in parallel do the following:
\begin{enumerate}
    \item Compute approximations $y$ of $x$ and check if one may take $y \in \Z$ and whether the Hensel condition $v_p(f^{(r_i-1)}(y))> 2v_p(f^{(r_i)}(y))$ holds.
    \item Enumerate rationals $q \in \Q$ matching the $p$-adic expansion of $x$ and check algebraically whether $f_{P_i}(q) = 0$.
\end{enumerate}
Checking for rationals deals with the situation when a rational zero is shared by several $f_{P_i}$. Pseudocode for the entire algorithm is given in Algorithm \ref{alg:fullAlg}.

\begin{algorithm} 
\SetKwFunction{zeroSearch}{zeroSearch}
\SetKw{And}{and}
\SetKwData{zerosFound}{zerosFound}
\SetKwData{multCount}{multCount}
\SetKwData{henselZeros}{henselZeros}
\SetKwData{ratZeros}{ratZeros}
\SetKwData{residues}{residues}
\SetKw{Output}{output:}
\SetKwFor{Parallel}{in parallel until}{do}{}
\SetKwFor{Until}{until}{do}{}
\caption{The \protect\zeroSearch subroutine} \label{alg:zeroSearch}
\KwIn{A non-zero $\p$-adic analytic function $f: \O_{\p} \to \O_{\p}$, $\pi$ a computed algebraic number that is a uniformiser for $\p$, and $A$ a set of representatives of the residue field ${\O_{\p}}/{\p}$ computed as described in Section \ref{sec:interpolation}.}
\KwOut{A set \henselZeros of tuples $(x,\nu)$ such that there is a unique simple zero $z$ of $f$ with $x \equiv z \mod \pi^\nu$, and a set \ratZeros of tuples $(a/b,n)$ where $a/b$ is a zero of $f$ of multiplicity $n$}
\BlankLine
\Def{\zeroSearch$(f)$}{
\henselZeros $\leftarrow$ empty set\;
\ratZeros $\leftarrow$ empty set\;
\residues $\leftarrow \{(0,0)\}$\;
\While{$\residues \neq \emptyset$}{
\ForEach{$(z,r) \in \residues$}{
\ForEach{$a \in A$}{
$z_{r+1} \leftarrow z + a \pi^r$\;
\If{$v_\p(f(z_{r+1})) > 2v_\p(f'(z_{r+1}))$ \And $\nexists (x,\nu) \in \henselZeros$ \textup{with} $x \equiv z_{r+1} \mod \pi^\nu$}{
append $(z_{r+1},v_\p(f'(z_{r+1}))+1)$ to \henselZeros\;
}
\Else{
\ForEach{$a/b \in \Q$ \textup{with} $a,b$ \textup{coprime}, $|a|,|b| \leq r$
}{
$n \leftarrow$ multiplicity of $a/b$ as a zero of $f$\;
\If{$f(a/b) = 0$ \And $(a/b,n) \not\in \ratZeros$ \And $\nexists (x,\nu) \in \henselZeros$ \textup{with} $a/b \equiv x \bmod \pi^\nu$}{
Append $(a/b,n)$ to \ratZeros\;
}
}
}
$n \leftarrow$ the number (counting multiplicity) of extended $p$-adic zeros $\equiv z_{r+1} \bmod \pi^{r+1}$ (computed using Thm.~\ref{thm:newton_polygon})\;
$m \leftarrow$ the sum of multiplicities of all zeros $x$ in $\henselZeros$ or $\ratZeros$ with $x \equiv z_{r+1} \bmod \pi^{r+1}$\;
\lIf{$n=m$}{\Next $a$}
\lElse{Append $(z_{r+1},r+1)$ to \residues}
}
Remove $(z,r)$ from \residues\;
}
}
\Output{\henselZeros,\ratZeros}
}

\end{algorithm}
\begin{algorithm} 
\SetKwFunction{zeroSearch}{zeroSearch}
\SetKwData{candidates}{candidates}
\SetKwData{Fzeros}{Fzeros}
\SetKwData{henselZeros}{henselZeros}
\SetKwData{ratZeros}{ratZeros}
\SetKwData{uhenselZeros}{u\_henselZeros}
\SetKwData{uratZeros}{u\_ratZeros}
\SetKw{Output}{output:}
\SetKw{And}{and}
\caption{An algorithm to compute the $p$-adic zeros of an LRS $\LRS{u}$}
\KwIn{A non-degenerate LRS $\LRS{u}$, number field $\K$ containing $\LRS{u}$ and its characteristic roots, and prime ideal $\p \subseteq \O$ lying above prime $p$ with $v_\p(\lambda) = 0$ for all characteristic roots $\lambda$. We let $\pi$ be a computed uniformiser for $\p$, and $A$ a set of representatives of the residue field ${\O_{\p}}/{\p}$} \label{alg:fullAlg}
\KwOut{All the $p$-adic zeros of $\LRS{u}$ in the form of two lists. First, a list \uhenselZeros of tuples $(x,\nu,f_i,\ell)$ where $x, \nu, \ell \in \Z$, $f_i : \O_\p \to \O_\p$ is a factor of the $\ell$th interpolant $F_\ell$ of $\LRS{u}$ and there is a unique simple $p$-adic zero $z \in \Z_p$ of $f_i$ (and so a zero of $F_\ell$) such that $x \equiv z \mod p^\nu$. Secondly, a list \uratZeros of tuples $(a/b,\ell)$ for $a/b \in \Q$ with $F_\ell(a/b) = 0$.}
Compute a multiplicatively independent set $\{\alpha_1, \dots , \alpha_s\}$ of algebraic numbers that generates every characteristic root as a monomial (replace $\K$ by $\K(\alpha_1,\dots,\alpha_s)$ if necessary)\;
Compute least integer $N > 0$ such that $v_\p(\lambda_i^{N}-1) > \frac{e}{p-1}$ for all $1\leq i \leq s$, where $e = e_{\p/p}$ is the ramification index\;
\uhenselZeros $\leftarrow$ empty list\;
\uratZeros $\leftarrow$ empty list\;
\ForEach{$0 \leq \ell \leq N-1$}{
Compute Laurent polynomial $Q$ such that $u_{Nn+\ell} = Q(n,\alpha_1^{Nn},\dots,\alpha_s^{Nn})$\;
$F(x) \leftarrow Q(x,\exp(x\log \alpha_1^{N}),\dots,\exp(x\log \alpha_s^{N}))$\;
Factorise $Q = RP_1^{r_1}\dots P_m^{r_m}$ for irreducible polynomials $P_1, \dots, P_m$ and Laurent monomial $R(\alpha_1^n,\dots,\alpha_s^n)$\;
\ForEach{\textup{$P_i$}}{
$f_{P_i}(x) \leftarrow P_i(x,\exp(x \log \alpha_1^{N}),\dots,\exp(x \log \alpha_s^{N}))$\;
\henselZeros,\ratZeros $\leftarrow$ \zeroSearch$(f_{P_i})$\;
\ForEach{$(a/b,n) \in \ratZeros$}{
Append $(a/b,\ell)$ to \uratZeros\;
}
\ForEach{$(x,\nu) \in \henselZeros$}{
\ForEach{$n \geq 1$}{
$x_n \leftarrow$ approximation of the zero $z$ of $f_{P_i}$ corresponding to $(x,\nu)$ such that $x_n \equiv z \mod \pi^n$\;
\If{$\nexists y \in \{0,1,\dots,p^n-1\}$ \textup{with} $y \equiv x_n \mod \pi^n$}{
\Next $(x,\nu)$\;
}
\Else{
\ForEach{$a/b \in \Q$ \textup{with} $|a|,|b| \leq n$}{
\If{$F(a/b) = 0$ \And $a/b \equiv x \mod \pi^\nu$}{
Append $(a/b,\ell)$ to \uratZeros\;
\Next $(x,\nu)$\;
}
\Else{
\Next $a/b$\;
}
}
$y \leftarrow $ integer in $\{0,1,\dots,p^n-1\}$ such that $y \equiv x_n \mod \pi^n$\;
\If{$v_p(F^{(r_i-1)}(y)) > 2v_p(F^{(r_i)}(y))$}{
Append $(y,\nu,f_{P_i},\ell)$ to \uhenselZeros
\Next $(x,\nu)$\;
}
\Else{
\Next $n$
}
}
}
}
}
}
\Output{\uhenselZeros,\uratZeros}
\end{algorithm}

% Using the subroutine in Algorithm \ref{alg:zeroSearch}, we present an algorithm to find all the $p$-adic zeros of an LRS as Algorithm \ref{alg:fullAlg}. Computing the multiplicatively independent subset in Line 1 can be done by Masser's Theorem (Thm.~\ref{thm:Masser}).  In Line 2 we use multiplicative relations to express the $M$-th power of each root that is not in the multiplicatively independent set as a monomial (with positive and negative integer powers) in those roots in the set.  This allows us 
% to express the subsequence $\langle u_{Mn+t}\rangle_{n=0}^\infty$ as a polynomial in $n$ and the $n$-th powers of the multiplicatively independent roots.  
%Note that Line 2 is required as when expressing dependent characteristic roots in terms of the multiplicatively independent characteristic roots there may be roots of unity introduced. For example, we may have a relation $\lambda_1^2=\lambda_2^2$ arising from $\lambda_1 = - \lambda_2$. Taking subsequences to raise each characteristic root to a sufficiently large power eliminates these roots of unity. 

Theorem~\ref{thm:pDecidable} follows from the next result.
\begin{proposition}
Assuming the $p$-adic Schanuel  Conjecture, Algorithm \ref{alg:fullAlg} always terminates.  Upon termination it outputs all $p$-adic zeros of the  given LRS $\LRS{u}$.
\end{proposition}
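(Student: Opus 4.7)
The plan is to argue termination and correctness separately, with Corollary~\ref{multCor} doing the heavy lifting.

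First I would verify the preprocessing in Algorithm~\ref{fullAlg}: splitting $\LRS{u}$ modulo $M$ turns each $u_{Mn+t}$ into a polynomial expression in $n$ and the $n$-th powers of a multiplicatively independent subset of the characteristic roots, factoring produces irreducible pieces, and the subsequent affine reparameterisation $n=Nx+\ell$ together with the choice $v_\p(\lambda_i^{MN}-1) > e/(p-1)$ guarantees that the exponential polynomials $f$ are analytic on $\O_\p$. Multiplicative independence of the $\lambda_i$ (obtained via Masser's Theorem) combined with injectivity of the $p$-adic logarithm on the relevant disk gives $\Q$-linear independence of $\log \lambda_1^{MN}, \ldots, \log \lambda_s^{MN}$, placing each such $f$ into the setting of Corollary~\ref{multCor}; hence every non-rational $\p$-adic zero of each $f$ is simple.

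For termination of the \texttt{zeroSearch} subroutine on an irreducible $f$, observe that Theorem~\ref{newton_polygon} supplies, at every node of the recursion, the exact number $n$ of extended $\p$-adic zeros lying in the current residue class modulo $\pi^{r+1}$. A branch terminates as soon as $n = \texttt{multCount}$. Every simple $\p$-adic zero of $f$ is eventually isolated in its own residue class at sufficient depth, whereupon $v_\p(f(z)) > 2 v_\p(f'(z))$ and Hensel's Lemma produces a specification. The only obstruction to termination is a zero of multiplicity $\geq 2$, but by Corollary~\ref{multCor} every such zero is rational and is therefore discovered by the parallel enumeration in branch~(1); once placed in \texttt{zerosFound} it raises \texttt{multCount} in every containing residue class, so all remaining zeros seen by the recursion are simple and yield to Hensel. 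A descent on the number of extended zeros yet to be accounted for then shows each branch of the recursion tree is finite.

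Correctness is guaranteed by two invariants: everything appended to \texttt{zerosFound} is a genuine $\p$-adic zero (rational zeros are verified directly, and Hensel specifications are correct by construction), and Theorem~\ref{newton_polygon} ensures that closing a branch when $n = \texttt{multCount}$ certifies that no $\p$-adic zero in that class has been missed. Aggregating over the finitely many $(t,\ell)$ shifts and irreducible factors yields all $\p$-adic zeros of $\LRS{u}$.

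The main obstacle is making the parallel-search logic rigorous: neither the infinite rational enumeration in branch~(1) nor the blind deeper recursion in branch~(2) is guaranteed to terminate in isolation when a residue class harbours a multiple zero, so the argument hinges on Corollary~\ref{multCor} guaranteeing that rational multiple zeros are captured by~(1) and propagated into \texttt{zerosFound} in time for the descendants of~(2) to detect $n = \texttt{multCount}$ and stop. Justifying this interaction cleanly is precisely where the $p$-adic Schanuel Conjecture is essential.
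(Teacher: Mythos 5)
Your proof follows the same overall strategy as the paper's, but it glosses over two points that the paper treats explicitly and that are genuinely needed for the argument to close.

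First, when a rational zero $q$ is appended to \texttt{zerosFound}, the update of \texttt{multCount} on line~19 requires knowing the \emph{multiplicity} of $q$ as a zero of $f$, not merely that $f(q)=0$. Determining this means deciding whether $f^{(j)}(q)=0$ for each $j$, and for $j\geq 1$ the value $f^{(j)}(q)$ is a polynomial in $\log\lambda_1^{MN},\ldots,\log\lambda_s^{MN}$ with algebraic coefficients. One cannot decide vanishing of such a quantity naively; the paper invokes the $p$-adic Schanuel Conjecture a second time here to conclude that these logarithms are algebraically independent, so that $f^{(j)}(q)=0$ iff all coefficients vanish. Your write-up assumes without comment that \texttt{multCount} is updated correctly once a rational zero is found, which is exactly the step where this argument is needed. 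Without it, the branch-closing condition $n = \texttt{multCount}$ could never be reached even in classes containing only a single rational multiple zero.

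Second, your termination argument appeals to ``a descent on the number of extended zeros yet to be accounted for,'' but extended zeros lying in $\mathbb{C}_p\setminus\O_\p$ are never accounted for at all---they are found by neither Hensel's Lemma nor rational enumeration. The reason the recursion still halts is a separate fact: any such zero $x$ lies at bounded distance from $\O_\p$ (there is $R$ with $v_\p(x-z)<R$ for all $z\in\O_\p$), so at sufficient depth $r$ the Newton-polygon count simply stops seeing $x$. The paper states this explicitly, and it is what guarantees that once all zeros \emph{in} $\O_\p$ have been collected, $n$ eventually equals \texttt{multCount} and every branch closes. Your descent argument as phrased would only work if the extended-zero count were actually decremented by the search, which it is not.
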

\begin{proof}
Since each factor $P_i$ defined on line 8 of Algorithm \ref{alg:fullAlg} is irreducible, by Cor. \ref{cor:multCor} the $p$-adic Schanuel Conjecture implies that every zero in $\O_{\p}$ of $f_{P_i}$ either has multiplicity 1 or is rational. Therefore, $\zeroSearch(f_{P_i})$ will terminate.

Indeed, for each pair $(z,r)$ (line 6, Algorithm \ref{alg:zeroSearch}), there are 3 possibilities for extended $p$-adic zeros $x \in \O_{\C_p}$ detected on line 16 using Thm. \ref{thm:newton_polygon}.
\begin{enumerate}
    \item $x \in \O_{\C_p} \setminus \O_\p$. In this case $x$ will no longer be detected in higher iterations $(z',r')$ for large enough $r' > r$ as $\sup\{v_\p(x-y) : y \in \O_\p\} < \infty$. Therefore this case may be discarded for large enough $r$.
    \item $x \in \Q$ and has multiplicity $>1$. In this case $x = a/b$ for some $a,b \in \Z$ coprime and for large enough $r$ this will be found in the search on line 12. Note that the multiplicity of $a/b$ as a root of $f_{P_i}$ may be found as each $f_{P_i}^{(j)}(a/b)$ is a polynomial in $\log \alpha_1^N, \dots, \log \alpha_s^N$ with algebraic coefficients, so pick the smallest $j$ such that this polynomial is not identically zero and $p$-adic Schanuel's conjecture implies $f_{P_i}^{(j)}(a/b) \neq 0$. One may approximate $f_{P_i}^{(j)}(a/b)$ $p$-adically to certify it is non-zero.
    \item $x \in \O_\p$ and has multiplicity 1. In this case $x$ will be detected by Hensel's lemma using the check on line 9. Note that this check is done in the following way. If $z_{r+1} \in \Q$ then one can check algebraically (using the $p$-adic Schanuel Conjecture) whether $f(z_{r+1})$ or $f'(z_{r+1}) = 0$, and if they are non-zero then one can compute their valuations by computing sufficiently many $\p$-adic digits. Otherwise if $z_{r+1} \not\in \Q$ then by Cor. \ref{cor:multCor} either $f(z_{r+1}) \neq 0$ or $f'(z_{r+1}) \neq 0$ so by computing sufficiently many $\p$-adic digits of both quantities one can determine whether $v_\p(f(z_{r+1})) > 2v_\p(f'(z_{r+1}))$. 
\end{enumerate}
When $\zeroSearch(f_i)$ terminates, lines 14-30 of Algorithm \ref{alg:fullAlg} pick out the zeros identified that actually lie in $\Z_p$ and outputs them. Indeed, for every pair $(x,\nu)$ corresponding to zero $z \in \O_\p$ (line 14, Algorithm \ref{alg:fullAlg}), there are three possibilities:
\begin{enumerate}
    \item $z \in \O_\p \setminus \Z_p$. In this case, since $\sup\{v_\p(z-y) : y \in \Z_\p\}$ is finite, for large enough $n$ there will not exist $y \in \{0,1,\dots,p^n-1\}$ with $y \equiv x_n \mod \pi^n$, so this case is thrown out (lines 17-18).
    \item $z \in \Z_p \setminus \Q$. In this case, item 2 of \cref{lem:mainLemma} ensures $z$ is not a zero of $f_{P_j}$ for $j \neq i$, so $z$ is a zero of $F$ with multiplicity exactly $r_i$. Hence $z$ is confirmed as a zero of $F^{(r_i-1)}$ with Hensel's lemma by the check on line 27, which confirms $z \in \Z_p$.
    \item $z \in \Q$. In this case $z$ could be a zero of several of the $f_{P_1}, \dots, f_{P_m}$, so $z$ would be a zero of $F$ of multiplicity larger than $r_i$, so the if statement on line 27 never triggers. However, for large enough $n$, $z$ would be identified by the checks on lines 20-25.\qedhere
\end{enumerate}

\end{proof}
\begin{remark}
We stress that the  $p$-adic Schanuel Conjecture is required only for termination, not for correctness. When the algorithm terminates, its output is unconditionally correct. 
\end{remark}

\subsection{Simultaneous Skolem Problem}
\cref{lem:mainLemma} has important consequences for the Simultaneous Skolem Problem. Given two non-degenerate LRS $\LRS{u}$, $\LRS{v}$, suppose their exponential-polynomial expansions may be written in terms of multiplicatively independent algebraic numbers $\{\alpha_1, \dots, \alpha_s\}$. Suppose the multivariate Laurent polynomials defined by the respective exponential-polynomial expansions are coprime.\footnote{Note that two LRS that are coprime in the sense of~\cite{Everest2002} are coprime in our sense.} In this case we say that $\LRS{u}$ and $\LRS{v}$ are \emph{coprime}. Equivalently, $\LRS{u}$ and $\LRS{v}$ are coprime if there are no algebraic-valued LRS $\LRS{w}^{(1)},\LRS{w}^{(2)},\LRS{w}^{(3)}$ such that $\LRS{w}^{(1)}$ has order at least 2 and $u_n = w_n^{(1)}w_n^{(2)}, \, v_n = w_n^{(1)}w_n^{(3)}$ for all $n \in \N$. We define the Simultaneous Skolem Problem to be the problem of deciding whether two LRS $\LRS{u}$ and $\LRS{v}$ share an integer zero. 

\SIMP*
%\begin{theorem}
%Assuming the $p$-adic Schanuel Conjecture, the Simultaneous Skolem Problem on two coprime LRS $\LRS{u}, \LRS{v}$ is decidable.
%\end{theorem}
\begin{proof}
Pick a prime $p$ not dividing the characteristic roots of $\LRS{u},\LRS{v}$ and consider the LRS $\LRS{w}$ defined by $w_n = v_n^2 + p u_n^2$. For some $N \geq 1$, and all $0 \leq \ell \leq N-1$, their interpolants with respect to $p$ are related by $f_{w,\ell} = f_{v,\ell}^2 + p f_{u,\ell}^2$. For all $x \in \Z_p$, $f_{w,\ell}(x) = 0$ iff $f_{v,\ell}(x) = f_{u,\ell}(x) = 0$. Indeed, if $f_{v,\ell}(x),f_{u,\ell}(x)$ are non-zero then $v_p(f_{v,\ell}(x)^2)$ is even, and $v_p(pf_{u,\ell}(x)^2)$ is odd, so $f_{w,\ell} \neq 0$.

% Then $w_n = 0$ iff $v_n = 0$ and $u_n = 0$. Indeed if either $v_n$ and $u_n$ are non-zero, then $v_p(v_n^2)$ is even and $v_p(pu_n^2)$ is odd, so $w_n \neq 0$. 

By Thm.~\ref{thm:pDecidable}, we may find all $p$-adic zeros of $\LRS{w}$. By coprimality of $\LRS{u}, \LRS{v}$, all the $p$-adic zeros found must be rational according to item 2 of \cref{lem:mainLemma}. Therefore they may be found by a brute-force guess-and-check search, which in particular identifies all common integer zeros.
\end{proof}
%\begin{remark}
%The condition that $\LRS{u}$ and $\LRS{v}$ are coprime is fairly generic. 
%A sufficient condition for $\LRS{u}$ and $\LRS{v}$ to be coprime is that $\LRS{v}$ has a characteristic root that is multiplicatively independent from %the characteristic roots of $\LRS{u}$ (or vice-versa).
%\end{remark}
The Simultaneous Skolem Problem arises naturally in the study of the higher-dimensional Kannan-Lipton Orbit Problem \cite{Kannan_orbit_1986,chonev_orbit_2013}. Given a matrix $A \in \Q^{d \times d}$, initial vector $x \in \Q^d$ and subspace $S \subset \Q^d$, the Orbit Problem asks to decide whether there is $n \in \N$ such that $A^n \in S$. Now, one may write $S$ as the orthogonal space of the span of $k$ linearly independent vectors for some $k$, that is $S = \langle v_1,\dots,v_k\rangle^\perp$. Then $A^n x \in S$ if and only if $v_i^T A^n x = 0$ for all $i=1,\dots, k$. Note that $u^{(i)}_n := v_i^T A^n x$ defines $k$ LRS $\LRS{u}^{(1)},\dots, \LRS{u}^{(k)}$ whose recurrence relation is given by the characteristic polynomial of $A$. Therefore the Orbit Problem on $A,x,S$ reduces to the Simultaneous Skolem Problem on $\LRS{u}^{(1)},\dots, \LRS{u}^{(k)}$. Thus Thm. \ref{thm:simultaneous} may be used to decide many cases of the Orbit Problem when $k \geq 2$, however, we do not expand on this further here. 
\subsection{Algebraic-valued LRS}
Throughout the present article we have taken our LRS to be integer valued. The benefit of this is that integer-valued LRS always have interpolants $f : \Z_p \to \Z_p$. However, $p$-adic interpolants of algebraic-valued LRS do not in general map $\Z_p$ to $\Z_p$. 

One can still decide the $p$-adic Skolem Problem for algebraic-valued LRS by reducing to the integer case. First, given an algebraic LRS $\LRS{u}$, assume all its recurrence coefficients and initial values are algebraic integers - if not, then there's a $\lambda \in \overline \Q$ such that $u_n' = \lambda^nu_n$ is algebraic integer valued; consider $u_n'$ instead. Next, define $\K$ to be the number field containing all the characteristic roots and initial values of $\LRS{u}$. Consider 
\begin{align} \label{eqn:norm}
    v_n = N_{\K/\Q}(u_n) = \prod_{\sigma \in \mathrm{Gal(\K/\Q)}}\sigma(u_n)\, .
\end{align}
Then $\LRS{v}$ is an integer LRS, and we may solve the $p$-adic Skolem Problem on $\LRS{v}$. We now claim $\LRS{v}$ has identical $p$-adic zeros to $\LRS{u}$.

Indeed, $x \in \Z_p$ is a $p$-adic zero of an (algebraic) LRS $w_n$ if and only if there's a sequence $n_1,n_2, \dots \in \N$ such that $|u_{n_j}|_p \to 0$ and $|n_j - x|_p \to 0$ as $j \to \infty$. Using \eqref{eqn:norm} and the fact that $|\sigma(y)|_p = |y|_p$ for all $y \in \K$, $\sigma \in \mathrm{Gal}(\K/\Q)$, we see that if such a sequence $\langle n_j \rangle_{j=1}^\infty$ converging to a $p$-adic zero $x$ of $v_n$ exists, then we must have $|u_{n_j}|_p \to 0$ also, so $x$ must be a $p$-adic zero of $\LRS{u}$ also, and vice versa. This proves the claim.
\subsection{The Skolem Conjecture and the Skolem Problem} \label{sec:SkolemConjecture}
As noted earlier, it is open whether there is a Turing reduction between $\mathtt{FSP}(\N)$ and $\mathtt{FSP}(\Z_p)$. 
For a given sequence $\LRS{u}$, if
we happen to choose a prime $p \in \N$ such that all $p$-adic zeros of $\LRS{u}$ are rational, then they can all be identified, and the output of Algorithm \ref{alg:fullAlg} gives a certificate that we have found all the integer zeros. This solves $\mathtt{FSP}(\N)$ for $\LRS{u}$. One may ask whether such a prime always exists. It turns out that a generalisation of this idea is equivalent to the Skolem Conjecture, also known as the Exponential Local-Global Principle.
\begin{conjecture}[The Skolem Conjecture \cite{Sko37}]
Let $\LRS{u}$ be a simple rational LRBS taking values in the ring $\Z[\frac{1}{b}]$ for some integer $b$. Then $\LRS{u}$ has no integer zero iff, for some integer $m \geq 2$ with $\text{gcd}(b,m)=1$, we have that $u_n \not\equiv 0 \bmod m$ for all $n \in \Z$.
\label{conj:skolem}
\end{conjecture}

\begin{theorem}\label{Skolem_conjecture_equivalent}
The Skolem Conjecture is equivalent to the following statement: if $\LRS{u}$ is a simple rational LRBS taking values in the ring $\Z[\frac{1}{b}]$ for some integer $b$, then $\LRS{u}$ has no integer zero iff there exists $N \in \Z_{\geq 1}$ and primes $p_1, \dots , p_t \in \N$ coprime to $b$ such that for all $0 \leq \ell \leq N-1$, there exists $i$ such that $u_{Nn+\ell}$ has no $p_i$-adic zeros.
\end{theorem}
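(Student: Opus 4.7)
The plan is to prove the two implications separately. Two ingredients recur throughout: (i) for each rational prime $p$ coprime to $b$, the bi-infinite reduction $u_n \bmod p$ is purely periodic, since the sequence lies in the finite field $\Z[1/b]/p\Z[1/b]$ and the minimal recurrence reduces there to one whose trailing coefficient remains invertible; denote its period by $T_p$. (ii) Whenever $N$ is divisible by a sufficiently large integer $N_i$ depending on $p_i$ and the characteristic roots, the $\p_i$-adic interpolant $f_{\p_i,\ell} : \Z_{p_i}\to \Z_{p_i}$ of $\langle u_{Nn+\ell}\rangle_n$ exists, is continuous, and $\Z$ sits densely in $\Z_{p_i}$.

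\emph{Forward implication.} Assume the Skolem Conjecture, suppose $\LRS u$ has no integer zero, and let $m = p_1^{\alpha_1}\cdots p_t^{\alpha_t} \geq 2$ be a witness with $\gcd(m,b)=1$. Set $N := \mathrm{lcm}(T_{p_1},\dots,T_{p_t},N_1,\dots,N_t)$. For each $\ell\in\{0,\dots,N-1\}$, $u_\ell \not\equiv 0 \bmod m$ yields some index $i(\ell)$ with $u_\ell \not\equiv 0 \bmod p_{i(\ell)}$; since $T_{p_{i(\ell)}}\mid N$ we deduce $u_{Nn+\ell} \equiv u_\ell \not\equiv 0 \bmod p_{i(\ell)}$ for every $n\in\Z$. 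Continuity of $f_{\p_{i(\ell)},\ell}$ and density of $\Z$ in $\Z_{p_{i(\ell)}}$ then give $|f_{\p_{i(\ell)},\ell}(x)|_{p_{i(\ell)}} = 1$ throughout $\Z_{p_{i(\ell)}}$, so the interpolant has no zero on $\Z_{p_{i(\ell)}}$.

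\emph{Backward implication.} Assume the $\p$-adic statement with data $N$, $\p_1,\dots,\p_t$ and $\ell\mapsto i(\ell)$, and suppose $\LRS u$ has no integer zero. Since each continuous function $f_{\p_{i(\ell)},\ell}$ is zero-free on the compact space $\Z_{p_{i(\ell)}}$, the finite intersection property applied to the nested closed sets $\{x : v_{p_{i(\ell)}}(f_{\p_{i(\ell)},\ell}(x)) \geq k\}$---whose total intersection would be the empty set of zeros---yields a uniform bound $k_{i,\ell}$ with $v_{p_{i(\ell)}}(u_{Nn+\ell}) \leq k_{i,\ell}$ for every $n\in\Z$. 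Define $a_i := 1 + \max\{k_{i,\ell} : i(\ell)=i\}$ and $m := \prod_i p_i^{a_i}$, which is coprime to $b$. For any $n = Nn'+\ell$ with $0\leq \ell < N$ we have $p_{i(\ell)}^{a_{i(\ell)}} \nmid u_n$, whence $m\nmid u_n$, producing the required Skolem witness.

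\emph{Main obstacle.} The bulk of the argument is routine given the interpolant construction of Section~\ref{sec:prelims}; the only genuine subtlety, in the forward direction, is ensuring that the Skolem witness $m$ can be chosen so that the $\p_i$-adic interpolants are actually defined---i.e., that every prime $p_i\mid m$ is coprime not only to $b$ but also to the constant term of the recurrence (and to any denominators appearing in its minimal form). For generic sequences this poses no issue, but in general one must either upgrade an arbitrary witness to a ``good-prime'' witness or extend the notion of ``no $\p$-adic zero'' to accommodate the remaining primes trivially. The analytic core---the continuity/compactness bridge between arithmetic non-vanishing mod $m$ and analytic non-vanishing on $\Z_{p_i}$---is symmetric in the two directions and follows directly from the interpolant's continuity together with compactness of $\Z_{p_i}$.
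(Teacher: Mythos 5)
Your overall architecture (two implications; periodicity modulo prime powers to obtain a constant subsequence; continuity plus density of $\Z$ in $\Z_p$ to pass between arithmetic and $p$-adic non-vanishing; compactness of $\Z_p$ in the backward direction) matches the paper's, and your backward direction is correct---indeed it is more explicit than the paper's, which merely asserts that zero-freeness yields a bound on the $p$-adic valuations, whereas you supply the compactness/finite-intersection argument justifying it.

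However, your forward direction has a genuine gap. You let $m = p_1^{\alpha_1}\cdots p_t^{\alpha_t}$ be a Skolem witness, set $N$ divisible by the periods $T_{p_i}$ of $u_n \bmod p_i$ (i.e.\ modulo the \emph{first} power), and then assert that $u_\ell \not\equiv 0 \bmod m$ ``yields some index $i(\ell)$ with $u_\ell \not\equiv 0 \bmod p_{i(\ell)}$''. This implication is false: from $u_\ell \not\equiv 0 \bmod m$ one only gets, via CRT, some $i$ with $u_\ell \not\equiv 0 \bmod p_i^{\alpha_i}$, and it can perfectly well happen that $u_\ell \equiv 0 \bmod p_i$ for every $i$ (e.g.\ $m = 4$ and $u_\ell = 2$). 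Consequently the conclusion $|f_{\p_{i(\ell)},\ell}(x)|_{p_{i(\ell)}} = 1$ on all of $\Z_{p_{i(\ell)}}$ is not available. The fix is exactly what the paper does: work with the period of $u_n$ modulo $p_i^{\alpha_i}$ (not modulo $p_i$), so that each subsequence $\langle u_{Nn+\ell}\rangle_n$ is constant modulo $m$; then for each $\ell$ one obtains $v_{p_i}(u_{Nn+\ell}) < \alpha_i$ for some $i$ and all $n$, and density/continuity upgrade this to a positive lower bound on $|f_{\p_i,\ell}|$ over $\Z_{p_i}$. Until you replace $T_{p_i}$ by the period modulo $p_i^{\alpha_i}$, the argument does not go through.

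Your closing worry, that a Skolem witness $m$ might contain a prime $p_i$ for which the $\p_i$-adic interpolant is not defined, is worth raising but is not an obstruction here: for a simple rational LRBS taking values in $\Z[1/b]$ and any prime $p$ coprime to $b$, the characteristic roots are $\p$-adic units for $\p$ above $p$ (otherwise $v_\p(u_n)$ would be unbounded below as $n \to \pm\infty$, contradicting $u_n \in \Z[1/b] \subseteq \O_\p$), so the interpolant construction of Section~\ref{sec:prelims} applies to every such $p$, in particular to the prime factors of $m$.
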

\begin{proof}
Suppose $\LRS{u}$ is a simple rational LRBS of order $d$ taking values in the ring $\Z[\frac{1}{b}]$ for some integer $b$ and that $\LRS{u}$ has no integer zeros (the reverse implication is trivial for both statements). 

Suppose the Skolem Conjecture holds, then there's an integer $m$ with $\text{gcd}(m,b) = 1$ and $u_n \not\equiv 0 \bmod m$ for all $n \in \Z$. Write
\begin{align*}
m = \prod_{i=1}^t p_i^{k_i}.
\end{align*}
Since ${\Z}/{p_i^{k_i}\Z}$ has finitely many elements, the vector $(u_n, u_{n+1}, \dots , u_{n+d-1})$ takes only finitely many values mod $p_i^{k_i}$ and thus eventually repeats. Since $\LRS{u}$ has order $d$, this means that $u_n \bmod p^{k_i}$ is periodic, with some period $N_i$. Let $N = \prod_i N_i$. Then each subsequence $u_{Nn+\ell}$ for $0 \leq \ell \leq N-1$ is constant mod $m$. In particular, for each $0 \leq \ell \leq N-1$ there's $i$ such that $u_{Nn+\ell} \not\equiv 0 \bmod p_i^{k_i}$. Therefore there are no $p_i$-adic zeros of $u_{Nn+\ell}$.

Now suppose there exist $N$, $p_1, \dots , p_t$ as in the theorem statement. Then for each $0 \leq \ell \leq N-1$, the subsequence $u_{Nn+\ell}$ has no $p_i$-adic zeros for some $i$, which means that $v_{p_i}(u_{Nn+\ell}) < k_\ell$ for some integer $k_\ell$. Let $k = \underset{\ell} \max \ k_\ell$. Then the Skolem Conjecture holds for $\LRS{u}$ with $m = \prod_i p_i^k$.
\end{proof}

%We may now prove Theorem \ref{thm:simpleReduction}.
%\simpleReduction*
%\begin{proof}
%We describe an algorithm using an oracle for $\mathtt{FSP}(\O_\p)$ to find all zeros in $\Z$ of a simple rational LRS $\LRS{u}$, assuming the Skolem %Conjecture. Let $\K$ be the number field containing all characteristic roots of $\LRS{u}$. Carry out the following computations in parallel:
%\begin{itemize}
%    \item Choose successive primes $p$ with $v_p(\lambda) = 0$ for all characteristic roots $\lambda$, choose prime ideal $\p \subseteq \O$ lying above $p$ and find all $\p$-adic zeros of $\LRS{u}$ using an oracle for $\mathtt{FSP}(\O_\p)$. 
%    \item Attempt to identify found $\p$-adic zeros as integers by enumerating integers and checking whether they are $\p$-adic zeros of $\LRS{u}$.
%    \item Compute the digit expansions of the $\p$-adic zeros of $\LRS{u}$, stopping if any digit lies outside $\Z_p$.
%\end{itemize}
%For any $\p$-adic zero lying outside of $\Z_p$, the fact that it lies outside $\Z_p$ can be certified by computing its expansion far enough. Therefore, after approximating each zero far enough, eventually the only $\p$-adic zeros left to consider will all lie in $\Z_p$. By Theorem \ref{Skolem_conjecture_equivalent}, the Skolem Conjecture implies that we eventually find primes $p_1, \dots, p_t$ such as in the theorem statement, providing a certificate that all integer zeros have been found.
%\end{proof}

%Note that an algorithm to decide the Skolem Problem for simple rational LRS was already presented in \cite{SkolemMeetsSchanuel}, here we simply provide another perspective.

\section{Concluding Remarks}
\subsection{A Remark on Rational Zeros} \label{sec:rat_zeros}
Note that rational non-integer zeros can be found as $p$-adic zeros for certain $p$, but not always. This is because, informally, interpolating an LRS $\LRS{u}$ using some prime ideal $p$ fixes a definition of each $\lambda_i^{\frac{1}{b}}$ for each $b \in \Z_{\geq 1}$, whereas the notion of rational zero allows $\lambda_i^{\frac{1}{b}}$ to be any $b$-th root of $\lambda_i$. This phenomenon can be seen in the example below.

\begin{example} \label{Tribonacci}
Consider the Tribonacci sequence, defined by $u_{n+3} = u_{n+2} + n_{n+1} + u_n$ and $u_0 = 0$, $u_1 = u_2 = 1$. In \cite{bilu_twisted_2025} it is shown that the set of rational zeros of the Tribonacci sequence is exactly $\{0,-1,-4,-17,1/3,-5/3\}$.

The characteristic polynomial of the Tribonacci sequence splits in $\Z_p$ for $p = 47, 103,199$ (among others). Let the $\ell$-th interpolant with respect to $p$ be denoted $f_{p,\ell}$, that is, the analytic function $f_{p,\ell} : \Z_p \to \Z_p$ such that $f_{p,\ell}(n) = u_{Nn+\ell}$ for $\ell\in\{0,\ldots,N-1\}$. Here we used $N= 46, 51, 198$ for $p=47,103,199$ respectively. We used our tool to compute all the $p$-adic zeros of $\LRS{u}$ for each $p$. Define $\mathcal Z_p$ to be the set of tuples $(z,\ell)$, where $z$ is the $p$-adic zero of $\LRS{u}$ such that $f_{p,\ell}(z) = 0$.
\begin{align*}
\mathcal Z_{47} = \{(0,0),(-2/3, 29), (-1,29), (-2/3,31),(-1,42),(-1,45)\} \\
\mathcal Z_{103} = \{(0,0), (-1/3,13), (-1/3,16),(-1/3,17),(-2/3,17),(-2/3,30), \\ (-2/3,33),(-2/3,34),(-1,34),(-1,47),(-1,50)\}, \\
\mathcal Z_{199} = \{(0,0),(185+195\cdot 199 + 135 \cdot 199^2 + \dots,26), (-1/3,49), (-1/3,62), \\ (-1/3,65), (-1/3,66), (52+63 \cdot 199 + 3 \cdot 199^2 + \dots, 92), (-2/3,115), \\(-2/3,128), (-2/3,131), (-2/3,132), (118+129\cdot 199 + 69 \cdot 199^2 + \dots,158), \\ (-1,181),(-1,194), (-1,197) \} \, .
\end{align*}
For each tuple $(z,\ell)$ corresponding to a rational zero of $\LRS{u}$, recover the rational zero as $Nz+\ell$.
\begin{itemize}
    \item For $p=47$ and $N=46$, all the rational zeros of $\LRS{u}$ are correctly identified and there are no transcendental $47$-adic zeros. 
    \item For $p=103$ and $N=51$, each integer zero gives rise to three $103$-adic zeros. For example, $n = -4$ gives rise to $(-1/3,13), (-2/3,30), (-1,47)$. However, the rational zeros of $\LRS{u}$ do not appear as $103$-adic zeros of $\LRS{u}$. There are also no transcendental $103$-adic zeros of $\LRS{u}$.
    \item For $p=199 $ and $ N=198$, again each integer zero gives rise to three $199$-adic zeros. Similarly to $p=103,$ the rational zeros of $\LRS{u}$ do not appear as $199$-adic zeros, but there are several transcendental $199$-adic zeros of $\LRS{u}$.
\end{itemize}
\end{example}
It is also possible that a rational $p$-adic zero $z$ of a subsequence $u_{Nn+\ell}$ of some LRS $\LRS{u}$ does not correspond to a ``true'' zero of the original sequence, i.e, $Nz + \ell \in \Z$ but $u_{Nz+\ell} \neq 0$.
\begin{example}
Let $u_n = (2+i)^n+(2-i)^n$. Using our tool with $p = 13$, we find that the analytic function $f_6 : \Z_{13} \to \Z_{13}$ satisfying $f_6(n) = u_{12n + 6}$ for all $n \in \N$ has $f_6(-1/2) = 0$. Yet $u_{12\cdot(-\frac{1}{2}) + 6} = u_0 \neq 0$. This is because $\exp(\frac{1}{2} \log((\lambda_i)^{12})) = \pm \lambda_i^6$ for $\lambda_1, \lambda_2$ the characteristic roots of $\LRS{u}$ embedded in $\Z_{13}$. It so happens that in $\Z_{13}$, up to relabelling $\lambda_1,\lambda_2$ we have $\exp(\frac{1}{2} \log((\lambda_1)^{12})) =  \lambda_1^6$, $\exp(\frac{1}{2} \log((\lambda_2)^{12})) = - \lambda_2^6$, so $f_6(-1/2) = 0$. Informally, what went wrong is that roots of unity were introduced by the ``wrong'' square roots of $\lambda_i^{12}$ being chosen. In this situation we say $0$ is a twisted rational zero of $\LRS{u}$. See \cite{bilu_twisted_2025} for a definition and exploration of this phenomenon. It should be noted however, that every genuine integer zero of $\LRS{u}$ will arise as a $p$-adic zero for all primes $p$. Indeed, if $u_m = 0$, then for every integer $N$ there exists $n, \ell$ such that $Nn+\ell = m$, and $n$ will be a zero of any $p$-adic interpolant $f_\ell$ of $u_{Nn+\ell}$ since $\exp(n\log(\lambda^N)) = \exp(\log(\lambda^{Nn})) = \lambda^{Nn}$ for all characteristic roots $\lambda$, so $f_\ell(n) = u_{Nn+\ell} = u_m = 0$.
\end{example}

\section{Implementation and Experimental Analysis} \label{sec:experiments}
\begin{table}[ht]
\resizebox{\linewidth}{!}{%
\begin{tabular}{@{}rrrrrrrrrrrrrrrr@{}}
\toprule
$d$ & Success & Total & \%  & \showclock{0}{45} & 
\multicolumn{7}{l}{Count of instances with $n$ zeros for $n$:}& Avg & \multicolumn{3}{l}{\#Zeros of type} \\
&  (Count) & (Count)  &   & & 0    & 1    & 2    & 3   & 4   & 5  & 6+  & prime & $\mathbb{Z}$ & $\mathbb{Q}\setminus\mathbb{Z}$ & ? \\ \midrule
2     & 8886            & 8886        & 100\%              & 0.6                            & 4322 & 3730 & 625  & 112 & 59  & 20 & 18  & 9.2              & 839      & 108      & 4817     \\
3     & 8921            & 8921        & 100\%              & 2.2                            & 3896 & 2825 & 1526 & 378 & 154 & 21 & 121 & 34.5             & 862      & 3        & 7962     \\
4     & 9080            & 9160        & 99\%               & 11.3                           & 3527 & 3017 & 1687 & 537 & 183 & 28 & 101 & 144.7            & 983      & 2        & 9281     \\
5     & 4892            & 9172        & 53\%               & \textcolor{gray}{25.7}                           & 1830 & 1710 & 903  & 316 & 95  & 13 & 25  & \textcolor{gray}{263.8}            & 568      & 0        & 4803     \\
6     & 934             & 9162        & 10\%               & \textcolor{gray}{30.0}                           & 346  & 327  & 188  & 50  & 15  & 2  & 6   & \textcolor{gray}{252.5}           & 144      & 0        & 1081     \\
7     & 96              & 9201        & 1\%                & \textcolor{gray}{33.3}                           & 29   & 34   & 22   & 8   & 3   &  0  & 0   & \textcolor{gray}{240.9}            & 17       & 0        & 97       \\ \bottomrule
\end{tabular}
}
\caption{Summary of ``Hensel-only'' algorithm, for each order from $2-7$. The \showclock{0}{45} indicates the average running time in seconds for successful cases (failed cases timeout at 60s). Averages are \textcolor{gray}{gray} where skewed by having many timeouts (counts may also be affected, but should be considered relative to the successful cases).\vspace{-0.3cm}%just to make them look better closer together
}
\label{table:experiments1}
\end{table}

\begin{table}[ht]
\begin{subtable}[t]{0.6\textwidth}
\setlength\tabcolsep{0pt}
\begin{tabular*}{\linewidth}{@{\extracolsep{\fill}} rrrrrr }
% \begin{tabular}{@{}rrrrrr@{}}
\toprule
$d$ & Avg & StdDev & Max & Min & Timeouts  \\ \midrule
2     & 9                & 5               & 37            & 3             & 0/4435    \\
3     & 35               & 29              & 227           & 5             & 0/4460    \\
4     & 151              & 146             & 1831          & 5             & 0/4578    \\
5     & 817              & 905             & 11677         & 7             & 0/4582    \\
6     & 5962             & 6395            & 45413         & 7             & 1/4581    \\
7     & 18164            & 13268           & \textcolor{gray}{49957}         & 31            & 1842/4602 \\ \bottomrule
\end{tabular*}
\caption{Growth of primes that would be used by the algorithm (mean, standard deviation, max and min of the required prime reported for each LRS order $d$), with timeout of 60s or if the prime would exceed 50000. The max at order 7 (\textcolor{gray}{gray}) can be attributed to the timeout, rather than the true value.}
\label{table:primes}
\end{subtable}
    \hfill
\begin{subtable}[t]{0.35\textwidth}
\begin{tabular}{@{}rrrrr@{}}
\toprule
$d$ & Success & Total & \%    & \showclock{0}{45}     \\ \midrule
2 & 8886    & 8886  & 100\% & 0.9  \\
3 & 8920    & 8921  & 100\% & 3.0  \\
4 & 8613    & 9160  & 94\%  & 19.1 \\
5 & 2       & 9172  & 0\%   & \textcolor{gray}{5.9}  \\
6 & 0       & 9162  & 0\%   &      \\
7 & 0       & 9201  & 0\%   &      \\ \bottomrule
\end{tabular}
\caption{Summary of the full algorithm with 60s timeout.}
\label{table:experiment2}
\end{subtable}
\caption{Values \textcolor{gray}{gray} where skewed by the timeout.}
\end{table}

\begin{figure}[h]
    \centering
    \includegraphics[width=\linewidth]{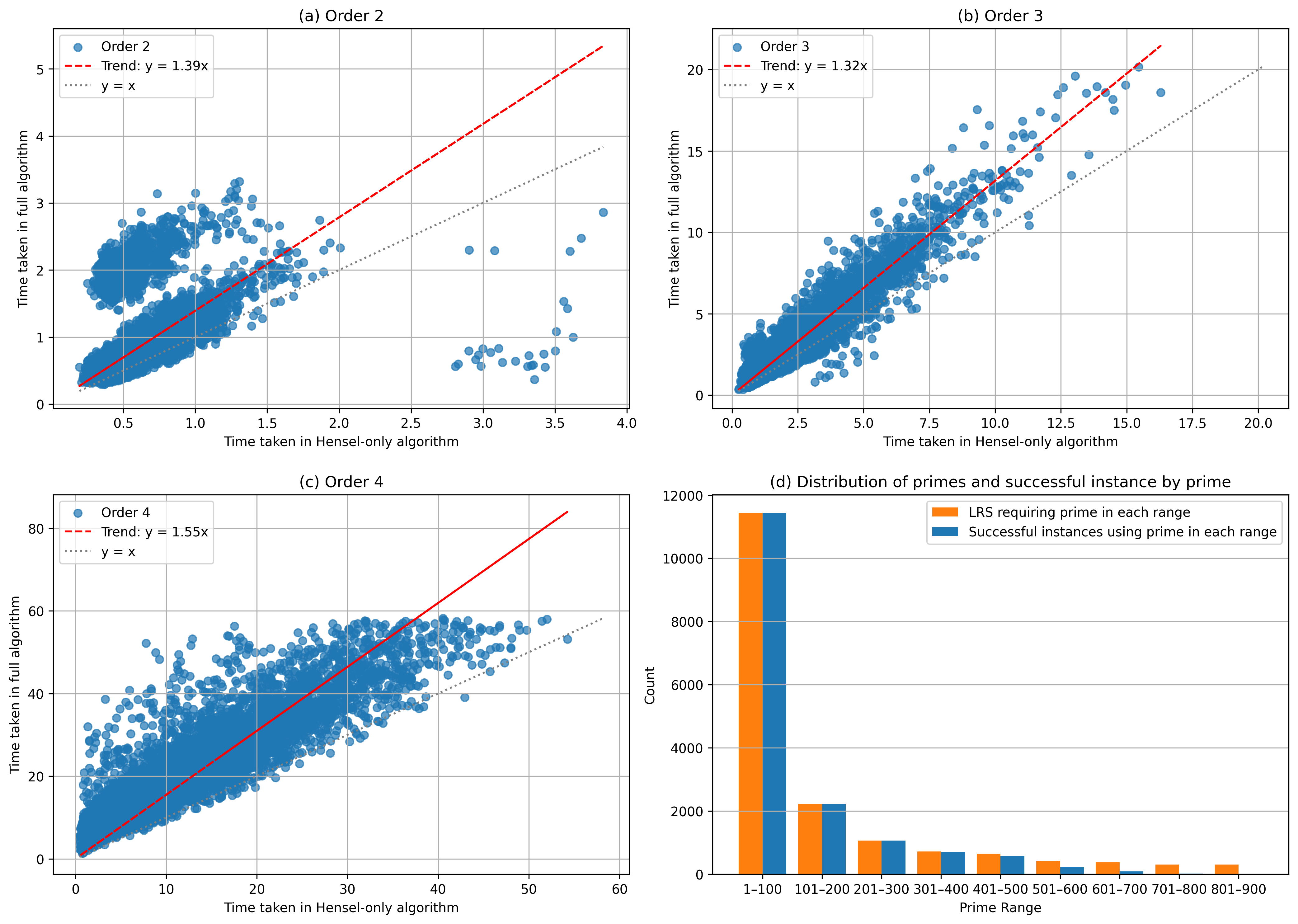}
    \caption{(a)-(c): Timing comparison between the Hensel-only and full algorithm (in seconds) in cases where both succeed within 60s. 
    (d): Distribution of instances requiring a prime in a given range, and the instances for which the Hensel-only algorithm terminates in 60s (displayed up to 900); the system is highly effective up to primes of around 400-500 in this time limit.}
    \label{fig:comparisons}
\end{figure}

In the case where the characteristic polynomial of $\LRS{u}$ splits in $\Z_p$, it is straightforward to implement Algorithm \ref{alg:fullAlg}. We have written an implementation that we have integrated into the \textsc{Skolem} Tool, first developed as part of~\cite{SkolemMeetsSchanuel}. The source code is available at~\cite{bacik_2025_16794130}. The tool is written in Python, using the SageMath computer-algebra system.

Concretely, the tool first searches for the smallest prime greater than some prescribed lower bound (by default 3) such that $g$ splits in $\Z_p$, before computing the $p$-adic zeros of $\LRS{u}$. We have also implemented a shortened ``Hensel-only'' algorithm that assumes all zeros have multiplicity 1 and uses only Hensel's Lemma to identify zeros. This avoids having to compute a multiplicatively independent generating set for the characteristic roots, having to consider irreducible factors, and having to carry out a parallel search for rational zeros. LRS with $p$-adic zeros of multiplicity greater than 1 are relatively rare, so this shortened algorithm terminates for most (but not all) LRS in general\@.

When a zero is found, the tool attempts to determine if it is a (twisted) integer or rational zero by guessing reasonably-sized rational numbers that match the p-adic expansion of the zero found to certain precision. If this procedure is inconclusive the unidentified zero is likely to be transcendental over $\Q$, but we cannot be sure, thus it is reported as unknown. In the interface, the user can request the $p$-adic expansion up to any given precision.

% The algorithm described in \cref{alg:fullAlg} has been implemented 
% %into the \textsc{Skolem} tool\footnote{\url{https://skolem.mpi-sws.org?padic}} first introduced in \cite{SkolemMeetsSchanuel}, 
% for primes $p$ such that the characteristic polynomial $g$ of a given LRS splits in $\Z_p$. 

We report the analysis in \cref{fig:comparisons} and \cref{table:experiments1,table:experiment2,table:primes}. Our experiments considered the same randomly generated set of LRS as used in \cite{SkolemMeetsSchanuel}, and ran with up to 24 instances in parallel on 32 core (including Hyper-Threading) Intel Xeon E5-2667 v2 machines with 256GB RAM. Instances which are degenerate or identically zero are excluded as unsupported. 
%Recall that the goal of \cite{SkolemMeetsSchanuel} was to find only \emph{integer} zeros.

\cref{table:experiments1} considers the shortened ``Hensel-only'' algorithm. With a sixty-second timeout the tool completes almost all instances up to order 4, but reduces to around half of instances at order 5.  In \cref{table:primes} and \cref{fig:comparisons}d we considered the prime required on a 50\% sample of the dataset, without computing the zeros (in order to speed up the computation). The tool appears to succeed within sixty seconds for LRS which require a prime up to between 400 and 500 (see  \cref{fig:comparisons}d). We observe the growth is approximately $d!$, and provides evidence of why the tool can easily handle order $4$ and most of order $5$, but order 6 would be an order of magnitude slower. The high standard deviation show that there is quite some spread.

The performance of \cref{alg:fullAlg} is depicted in \cref{table:experiment2}. With a 60s timeout the tool is effective up to order $4$, but a longer timeout would be required at order $5$. Counts of zeros, zero types and average prime are not shown, as the result is the same as the ``Hensel-only'' algorithm reported on in Table \cref{table:experiments1}. Not a single example of an LRS with a multiplicity of order greater than 1 was found in this randomly generated set of LRS, showing that these instances are relatively rare. Where both succeed for orders 2-4, the full algorithm is between 1.3 to 1.5 times slower than the Hensel-only approach (see~\cref{fig:comparisons}a-c). 

\bibliography{main.bib}

\end{document}